% !TEX encoding = UTF-8 Unicode
% !TEX program = pdflatex
% !TEX spellcheck = English

\makeatletter
\def\UTFActivateAnd#1{\bgroup\def\UTFviii@defined##1{\egroup#1##1}}
\UTFActivateAnd\def定{\UTFActivateAnd\def}定令{\UTFActivateAnd\let}

\documentclass{IEEEtran}

\usepackage{mathtools,amssymb,bm,amsthm,booktabs}
	\allowdisplaybreaks
	% math alphabet
	令Γ\Gamma 令Δ\Delta 令Θ\Theta 令Λ\Lambda 令Ξ\Xi 令Π\Pi 令Σ\Sigma
	令Υ\Upsilon 令Φ\Phi 令Ψ\Psi 令Ω\Omega 令α\alpha 令β\beta 令γ\gamma
	令δ\delta 令ε\varepsilon 令ζ\zeta 令η\eta 令θ\theta 令ι\iota 令κ\kappa
	令λ\lambda 令μ\mu 令ν\nu 令ξ\xi 令π\pi 令ρ\rho 令ς\varsigma 令σ\sigma
	令τ\tau 令υ\upsilon 令φ\varphi 令χ\chi 令ψ\psi 令ω\omega 令ϑ\vartheta
	令ϕ\phi 令ϖ\varpi 令Ϝ\Digamma 令ϝ\digamma 令ϰ\varkappa 令ϱ\varrho
	令ϴ\varTheta 令ϵ\epsilona
	% math bold
	定𝛂{\bm\alpha} 定𝛃{\bm\beta} 定𝛄{\bm\gamma} 定𝛅{\bm\delta}
	定𝛆{\bm\varepsilon} 定𝛇{\bm\zeta} 定𝛈{\bm\eta} 定𝛉{\bm\theta}
	定𝛊{\bm\iota} 定𝛋{\bm\kappa} 定𝛌{\bm\lambda} 定𝛍{\bm\mu}
	定𝛎{\bm\nu} 定𝛏{\bm\xi} 定𝛐{\bm\omicron} 定𝛑{\bm\pi} 定𝛒{\bm\rho}
	定𝛓{\bm\varsigma} 定𝛔{\bm\sigma} 定𝛕{\bm\tau} 定𝛖{\bm\upsilon}
	定𝛗{\bm\varphi} 定𝛘{\bm\chi} 定𝛙{\bm\psi} 定𝛚{\bm\omega}
	定𝛛{\bm\partial} 定𝛜{\bm\epsilon} 定𝛡{\bm\varpi} 定𝛝{\bm\vartheta}
	定𝛞{\bm\varkappa} 定𝛟{\bm\phi} 定𝛠{\bm\varrho} 定𝐛{\bm b} 定𝐜{\bm
	c} 定𝐯{\bm v} 定𝐱{\bm x} 定𝐲{\bm y} 定𝟎{\bm0} 定𝟏{\bm1}
	% math fonts
	定𝔼{\mathbb E} 定𝔽{\mathbb F} 定ℕ{\mathbb N} 定ℙ{\mathbb P}
	定ℝ{\mathbb R} 定ℤ{\mathbb Z} 定𝒞{\mathcal C} 定𝒟{\mathcal D}
	定ℰ{\mathcal E} 定ℱ{\mathcal F} 定𝒢{\mathcal G} 定𝒮{\mathcal S}
	定𝒯{\mathcal T} 定𝒴{\mathcal Y} 定𝒵{\mathcal Z}
	% math miscellaneous
	令ℓ\ell 令∂\partial 令∇\nabla
	% math open/close
	令√\sqrt 令⌈\lceil 令⌉\rceil 令⌊\lfloor 令⌋\rfloor 令⟨\langle 令⟩\rangle
	% math fence
	令｜\mid 定＼{\setminus\nolinebreak} 令：\colon
	% math accent
	令´\acute 令ˆ\hat 令˜\tilde 令¯\bar 令˘\breve 令˙\dot 令¨\ddot
	令°\mathring 令ˇ\check
	%math op
	令∏\prod 令∑\sum 令∫\int 令⋀\bigwedge 令⋁\bigvee 令⋂\bigcap 令⋃\bigcup
	令⨁\bigoplus 令⨂\bigotimes
	% math bin
	令±\pm 令·\cdot 令×\times 令÷\frac 令†\dagger 令•\bullet 令∖\setminus
	令∘\circ 令∧\wedge 令∨\vee 令∩\cap 令∪\cup 令⊕\oplus 令⊗\otimes 令⊙\odot
	令⋆\star
	% math ord
	令¬\neg 令…\ldots 令∀\forall 令∁\complement 令∃\exists 令∞\infty 令⊤\top
	令⊥\bot 令⋯\cdots 令★\bigstar
	% math rel
	令←\gets 令↑\uparrow 令→\to 令↓\downarrow 令↔\leftrightarrow 令↖\nwarrow
	令↗\nearrow 令↘\searrow 令↙\swarrow 令↞\twoheadleftarrow
	令↠\twoheadrightarrow 令↤\mapsfrom 令↦\mapsto 令↩\hookleftarrow
	令↪\hookrightarrow 令↾\upharpoonright 令∈\in 令∉\notin 令∋\ni 令∼\sim
	令≅\cong 令≈\approx 定≔{\coloneqq} 令≕\eqqcolon 令≠\neq 令≡\equiv
	定≢{\not\equiv} 令≤\leqslant 令≥\geqslant 令≪\ll 令≫\gg 令⊆\subseteq
	令⊇\supseteq 令⋮\vdots 令⋰\adots 令⋱\ddots 令⟂\perp 令♭\flat 令♮\natural
	令♯\sharp 令⟵\longleftarrow 令⟶\longrightarrow 令⟼\longmapsto
	% math script
	定⁰{^0} 定¹{^1} 定²{^2} 定³{^3} 定⁴{^4} 定⁵{^5} 定⁶{^6} 定⁷{^7} 定⁸{^8} 定⁹{^9}
	定₀{_0} 定₁{_1} 定₂{_2} 定₃{_3} 定₄{_4} 定₅{_5} 定₆{_6} 定₇{_7} 定₈{_8} 定₉{_9}
	% math def
	令㏒\log 令㏑\ln
	定～{\kern1em} 定！{\kern-1em}
	定†#1†{\text{#1}}
	定¬#1¬{\overline{#1}}
	\@namedef[{\begin{equation*}}
	\@namedef]{\end{equation*}}
	\DeclareMathOperator\TC{Tri\_Convexify}
	\DeclareMathOperator\LI{Linear\_Interp}
	\DeclareMathOperator\argmax{arg\,max}
	% channel
	\def\BMS{\mathcal{B\!M\!S}\@ifstar{\!_\diamondsuit}{}}
	\DeclareMathOperator\BSC{BSC}
	\DeclareMathOperator\BEC{BEC}
	\def\parall{\mathbin{\ooalign{%
		$\odot$\cr
		\hss\color{white}$\bullet$\hss\cr
		\hss\rotatebox[origin=cc]{120}{%
			$\vphantom\odot$\clap{\raisebox{.023em}{$\star$}}%
		}\hss\cr
	}}}
	\def\serial{\mathbin{\ooalign{%
		$\phantom\odot$\cr
		\hss\rule[-.05em]{.6em}{.6em}\hss\cr
		\hss\color{white}\rule[-.02em]{.54em}{.54em}\hss\cr
		\hss\rotatebox[origin=cc]{240}{%
			$\vphantom\odot$\clap{\raisebox{.023em}{$\star$}}%
		}\hss\cr
	}}}
	\def\para{{\ooalign{%
		$\scriptstyle\odot$\cr
		\hss\color{white}$\scriptstyle\bullet$\hss\cr
	}}}
	\def\seri{{\ooalign{%
		$\scriptstyle\phantom\odot$\cr
		\hss\rule[-.05em]{.45em}{.45em}\hss\cr
		\hss\color{white}\rule[-.02em]{.39em}{.39em}\hss\cr
	}}}
	令Ｓ\serial
	令Ｐ\parall
	令ｓ\seri
	令ｐ\para

\usepackage{xcolor}
	% https://brand.ucsd.edu/using-the-brand/web/index.html
	定色#1♯#2 {\definecolor{#1}{HTML}{#2}}
	色PMS2767♯182B49   色PMS3015♯00629B   色PMS1245♯C69214   色PMS116♯FFCD00     
	色PMS3115♯00C6D7   色PMS7490♯6E963B   色PMS3945♯F3E500   色PMS144♯FC8900     
	色Black♯000000     色CoolGray9♯747678 色PMS401♯B6B1A9    色Metallic871♯84754E

\usepackage{tikz,pgfplotstable}
	\tikzset{every picture/.style={line cap=round,line join=round}}
	\pgfplotsset{compat/show suggested version=false,compat=1.17}

\usepackage{csquotes}

\usepackage[style=alphabetic,maxbibnames=99,
	maxalphanames=4,minalphanames=3]{biblatex}
	\addbibresource{Sub4.7-3.bib}

\usepackage[hyphenbreaks]{breakurl}

\usepackage[unicode,pdfusetitle]{hyperref}
	\hypersetup{
		pdfsubject={Information Theory (cs.IT); 94B65},
		pdfkeywords={polar code, scaling exponent},
		colorlinks,allcolors=PMS2767,
	}

\usepackage{cleveref}
	
	定名#1?#2:#3 {\crefname{#1}{#2}{#3}}
		名section?Section:Sections
		名appendix?Appendix:Appendices
		名figure?Figure:Figure
		名table?Table:Tables
	定理#1?#2:#3 {\newtheorem{#1}[allams]{#2}名#1?#2:#3 }
		理thm?Theorem:Theorems
		理lem?Lemma:Lemmas
		理pro?Proposition:Propositions
		理cor?Corollary:Corollaries
	\theoremstyle{definition}
		理dfn?Definition:Definitions
		理exa?Example:Examples
	\theoremstyle{remark}
		理rem?Remark:Remarks
		理axi?Assumption:Assumptions
		理cla?Claim:Claims
	定式#1?#2:#3 {名#1?#2:#3 \creflabelformat{#1}{\textup{(##2##1##3)}}}
		式equ?equation:equations
		式ine?inequality:inequalities
		式for?formula:formulas
		式rat?ratio:ratios
		式cri?criterion:criteria
		式sup?supremum:suprema
		式gri?grid:grids
	定標#1:#2?{\label@in@display@optarg[#1]{#1:#2}}\AtBeginDocument{
		\def\label@in@display#1{\incr@eqnum\tag{\theequation}標#1?}}

\begin{document}

\title{
                    Sub-4.7 Scaling Exponent of Polar Codes
}
\author{
                                  Hsin-Po Wang
                                      and
                                 Ting-Chun Lin
                                      and
                                Alexander Vardy
                                      and
                                  Ryan Gabrys%
\thanks{
                              The authors are with
                  University of California San Diego, CA, USA.
                                Lin is also with
             Hon Hai (Foxconn) Research Institute, Taipei, Taiwan.
                           This work was supported by
                    NSF grants CCF-1764104 and CCF-2107346.
             Emails: \{hsw001, til022, avardy, rgabrys\} @ucsd.edu
}}
\maketitle

\advance\baselineskip0pt plus.25pt minus.125pt

\begin{abstract}\boldmath
	Polar code visibly approaches channel capacity in practice and is thereby
	a constituent code of the 5G standard.  Compared to low-density parity-check
	code, however, the performance of short-length polar code has rooms for
	improvement that could hinder its adoption by a wider class of applications.
	As part of the program that addresses the performance issue at short length,
	it is crucial to understand how fast binary memoryless symmetric channels
	polarize.  A number, called scaling exponent, was defined to measure the
	speed of polarization and several estimates of the scaling exponent were
	given in literature.  As of 2022, the tightest overestimate is $4.714$ made
	by Mondelli, Hassani, and Urbanke in 2015.  We lower the overestimate to
	$4.63$.
\end{abstract}

\section{Introduction}

	\IEEEPARstart P{olar code} was proved to be capacity achieving over any
	binary memoryless symmetric (BMS) channel \cite{Arikan09}.  Polar code also
	shows great potential in practice and it was selected as part of the 5G
	standard for wireless communication.  That being the case, polar coding for
	short block length has room for improvement when compared to low-density
	parity-check code, the other code in the 5G standard.  Improving
	short-length polar code further can pave the way for applications such as
	Internet of Things, as some devices can only afford easily-decodable code
	and others must reply very promptly.

	Now that improving the performance of polar code at finite block length is
	on the agenda, we first need to know how much we can say about the
	unmodified code.  There are two regimes that were considered in literature.
	In the error exponent regime, the code rate is fixed and the asymptote of
	the error probability is evaluated.  For polar code, it was shown that the
	block error probability scales as $\exp(-√N)$, where $N$ is the block
	length.  For variations of polar code that use different matrices as the
	polarizing kernel, the asymptote of error can also be computed and is about
	$\exp(-N^β)$.  Here, $β > 0$ is a number completely determined by the
	Hamming distances among the vector subspaces spanned by the rows of the
	kernel matrix.  Long story short, predicting the behavior of error
	probability at a fixed code rate is straightforward.  See
	\cite{AT09,KSU10,HMTU13,MT14} and those that cite them for more on this
	topic.

\begin{table}
	\caption{
		Two key ways to synthesize channels. \\
		SC stands for sequential cancellation decoder.
	}\label{tab:compare}
	\def\arraystretch{1.2}
	\tabcolsep4pt
	\centering\begin{tabular}{cc}
		\toprule
		$WＳW$                           & $WＰW$                             \\
		\midrule
		serial combination               & parallel combination              \\
		convolution at check node        & convolution at variable node      \\
		guess $X₁-X₂$ given $Y₁, Y₂$     & guess $X₂$ given $Y₁, Y₂, X₁-X₂$  \\
		decoded earlier in SC            & decoded later in SC               \\
		named $W'$ or $W^-$ or $W^{(1)}$ & named $W''$ or $W^+$ or $W^{(2)}$ \\
		more noisy than $W$              & more reliable than $W$            \\
		still BSC if $W$ is              & not BSC if $Z(W) ≠ 0,1$           \\
		$1 - Z(WＳW) ≥ (1 - Z(W))²$      & $Z(WＰW) = Z(W)²$                  \\
		\bottomrule
	\end{tabular}
\end{table}

	In the scaling exponent regime, the second approach that characterizes the
	performance of polar code, the error probability is fixed and the asymptote
	of the code rate is evaluated.  It is observed that the \emph{gap to
	capacity}, which is the difference between the channel capacity and code
	rate, scales as $N^{-1/μ}$.  Called the \emph{scaling exponent}, this number
	$μ$ is difficult to pinpoint exactly.  Here is a list of progresses made
	before.
	It was shown in \cite{HAU10} that $0.2786 ≥ 1/μ ≥ 0.2669$
	over binary erasure channels (BECs).
	It was shown in \cite{KMTU10} that $μ ≈ 3.626$ over BECs.
	It was shown in \cite{GHU12} that $3.553 ≤ μ$ over BMS channels.
	It was shown in \cite{HAU14} that $3.579 ≤ μ ≤ 6$ over BMS channels.
	It was shown in \cite{GB14} that  $μ ≤ 5.702$ over BMS channels.
	Mondelli, Hassani, and Urbanke
	showed in \cite{MHU16} that $μ ≤ 4.714$ over BMS channels.  The last record
	stood for seven years\footnote{ The preprint was first released in January
	2015 at \url{https://arxiv.org/abs/1501.02444}} and is the one we intend to
	improve upon.
	
	Scaling exponent's definition generalizes to other scenarios.
	To name a few: 
	Over additive white Gaussian noise channels, $μ ≤ 4.714$ \cite{FT17}.
	Over non-stationary BECs, $μ ≤ 7.34$;
	over non-stationary BMS channels, $μ ≤ 8.54$ \cite{Mahdavifar20}.
	Over (hereafter stationary) BECs, permuting the rows of the Kronecker powers
	of Arıkan's kernel $[¹₁{}⁰₁]$ improves the scaling from $μ ≈ 3.627$
	to $μ ≈ 3.479$ with little complexity overhead \cite{BFSTV17}.
	Using larger kernel matrices improves scaling exponents even further:
	over BECs,
	$μ ≈ 3.627$ for $2 × 2$ kernel,
	$μ ≈ 3.577$ for $8 × 8$ kernel \cite{FV14},
	$μ ≈ 3.346$ for $16 × 16$ kernel \cite{TT21},
	$μ ≈ 3.122$ for $32 × 32$ kernel,
	and $μ ≈ 2.87$ for $64 × 64$ kernel\footnote{The scaling exponent
	for the $64 × 64$ kernel involves Monte Carlo method.} \cite{YFV19}.
	(See \cite{Trofimiuk21s} for sizes between $9 × 9$ and $31 × 31$.)
	In general, any nontrivial matrix kernel over any alphabet
	has a finite scaling exponent over any discrete memoryless channel
	\cite[Chapter~5]{Chilly}.
	Meanwhile, dynamic kerneling is also shown, conceptually,
	to be improving the scaling exponent; for instance,
	$μ ≈ 4.938$ decreases to $μ ≈ 4.183$ for $3 × 3$ kernels over BECs\footnote{
	We recalculate the exponents for dynamic kerneling using
	power iteration to even the baseline for comparison.} \cite{YB15}.
	Most challengingly, a series of works attempted to reach $μ ≈ 2$,
	the optimal scaling exponent, and succeeded.
	Pfister and Urbanke \cite{PU19} showed that $μ ≈ 2$ can be reached
	using Reed--Solomon kernels over $q$-ary erasure channels as $q → ∞$.
	Fazeli, Hassani, Mondelli, and Vardy \cite{FHMV21} showed that
	$μ ≈ 2$ can be reached using random linear kernel over BECs.
	Guruswami, Riazanov, and Ye \cite{GRY22} showed that $μ ≈ 2$ can be
	reached using dynamic random linear kernels over BMS channels;
	plus the code construction is of polynomial complexity.
	Wang and Duursma \cite{Hypotenuse} showed that $μ ≈ 2$ can be reached
	using dynamic random linear kernels over discrete memoryless channels.

	A good scaling exponent over BMS channels has several boarder impacts.
	One: One can now describe the trade-off between gap to capacity anderror
	probability;  this is called the moderate deviation regime
	\cite[Section~2.6]{Chilly}.
	Two: For simplified decoders, the scaling exponent dictates how much
	soft-decision can be pruned away and controls the complexity
	\cite{LoglogTime}.
	Three: For parallelized decoders, the scaling exponent dictates how much
	work still needs to be processed in serial and controls the latency
	\cite{HMFVCG21}.
	Four: Polar code achieves the asymmetric capacity of any binary-input
	channel using the technique introduced in \cite{HY13};  the corresponding
	scaling exponent assumes the same estimate as BMS does
	\cite[Chapter~3]{Chilly}.  In fact, polar code achieves the same scaling
	exponent over discrete memoryless channels with constructions given in
	\cite{RWLP22}.
	Five: For lossless \cite{Arikan10,CK10} and lossy \cite{KU10} compression
	via polar coding, scaling exponent can be defined similarly and assumes the
	same bound \cite[Chapter~3]{Chilly}.
	Six: For multiple access channel, rate-splitting helps avoid time-sharing
	and achieve the same scaling exponent \cite{CY18}; for distributed lossless
	compression, a similar technique applies \cite[Chapter~8]{Chilly}.
	Seven: Over wiretap channels, polar code achieves the secrecy capacity but
	consumes secrete keys shared between Alice and Bob;  the scaling exponent
	gives prediction on the length of the secrete key \cite{WU16,GB17}.  Eight:
	For coded computation, scaling behavior is related to not only the code rate
	but also the waiting time \cite{FM22}.

	The goal of this paper is to improve $μ ≤ 4.714$ to $μ ≤ 4.63$.  The key
	idea is that a parallel combining followed by a serial combining makes a
	channel ``less BSC'' and hence some inequalities can be strengthened.  On
	the execution side, we remix a handful of techniques that are versatile and
	flexible:  We compute numerical convex envelopes to force functions become
	convex to apply Jensen's inequality; we use interval arithmetic library to
	obtain mathematically rigorous bounds to compensate coarse sampling; we use
	power iteration with a finite state automata to ``remember'' recent history.

	This paper is organized as follows.
	\Cref{sec:pre} reviews notations and preliminary results.
	\Cref{sec:old} reiterates the old proof of $μ ≤ 4.714$; we did not add
		anything new; the intention is to provide a baseline for comparison.
	\Cref{sec:tri} introduces tri-variate channel transformation $(U Ｐ V) Ｓ W$
		and the corresponding Bhattacharyya parameter inequalities.
	\Cref{sec:dfa} demonstrates how to use power iterations with memory
		to utilize the new Bhattacharyya parameter inequalities.
	\Cref{sec:wrap} wraps up the proof of the new result $μ ≤ 4.63$.

\section{Preliminary}\label{sec:pre}

\subsection{Binary memoryless symmetric channels}

\begin{figure}\centering
	\begin{tikzpicture}
		\draw[xscale=4]
			(0,2)node(0){$0$}(1,2)node(a){$0$}
			(0,0)node(1){$1$}(1,0)node(b){$1$}
		;
		\draw
			(0)edge[->]node[auto]{$1-p$}(a)
			(0)edge[->]node[auto,',pos=0.2]{$p$}(b)
			(1)edge[->]node[auto,pos=0.2]{$p$}(a)
			(1)edge[->]node[auto,']{$1-p$}(b)
		;
	\end{tikzpicture}
	\caption{
		$\BSC(p)$, binary symmetric channel with crossover
		probability $p$.
	}\label{fig:BSC}
\end{figure}
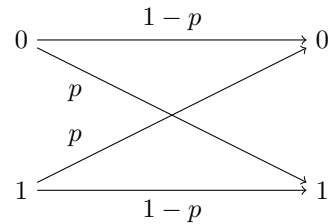

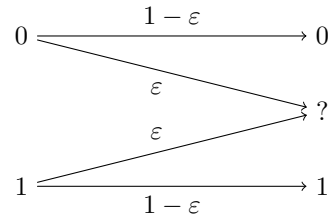
\begin{figure}\centering
	\begin{tikzpicture}
		\draw[xscale=4]
			(0,2)node(0){$0$}(1,2)node(a){$0$}
							 (1,1)node(e){$?$}
			(0,0)node(1){$1$}(1,0)node(b){$1$}
		;
		\draw
			(0)edge[->]node[auto]{$1-ε$}(a)
			(0)edge[->]node[auto,']{$ε$}(e)
			(1)edge[->]node[auto]{$ε$}(e)
			(1)edge[->]node[auto,']{$1-ε$}(b)
		;
	\end{tikzpicture}
	\caption{
		$\BEC(ε)$, binary erasure channel with erasure
		probability $ε$.
	}\label{fig:BEC}
\end{figure}

	A \emph{binary symmetric channel} (BSC) with crossover probability $p$ is a
	channel where a user feeds in a $0$ or a $1$ and it outputs what is fed with
	probability $1-p$ or flips the bit with probability $p$.  We denote it by
	$\BSC(p)$ and picture it in \cref{fig:BSC}.

	A \emph{binary erasure channel} (BEC) with erasure probability $ε$ is a
	channel where a user feeds in a $0$ or a $1$ and it outputs what is fed with
	probability $1-ε$ or outputs a question mark with probability $ε$.  We
	denote it by $\BEC(ε)$ and picture it in \cref{fig:BEC}.

	A \emph{binary memoryless symmetric} (BMS) generalizes BSC and BEC.  It is a
	channel where a user feeds in a $0$ or a $1$ and it outputs a symbol
	randomly selected from an alphabet set $𝒴$.  For a BMS channel $W$, the
	conditional probabilities of outputting $y ∈ 𝒴$ conditioning on inputs $0$
	and $1$ are denoted by $W(y|0)$ and $W(y|1)$, respectively.  A BMS channel
	is memoryless in the sense that repeated uses of this channel does not alter
	the conditional distribution.  A BMS channel $W$ is symmetric in the sense
	that for any output symbol $y ∈ 𝒴$, there is another symbol $¯y ∈ 𝒴$ such
	that $W(¯y|0) = W(y|1)$ and $W(¯y|1) = W(y|0)$.

\subsection{Channel equivalence and channel decomposition}

\begin{figure}\centering
	\begin{tikzpicture}
		\draw[xscale=2]
							 (1,3)node(a){$0$}(3,3)node(w){$0$}
			(0,2)node(0){$0$}(1,2)node(b){$1$}(3,2)node(x){$1$}
			(0,1)node(1){$1$}(1,1)node(c){$0$}(3,1)node(y){$?$}
							 (1,0)node(d){$1$}(3,0)node(z){$?$}
		;
		\draw
			(0)edge[->]node[auto,sloped]{$1-ε$}(a)
			(0)edge[->]node[auto,pos=0.3,sloped]{$ε$}(c)
			(1)edge[->]node[auto,',pos=0.3,sloped]{$1-ε$}(b)
			(1)edge[->]node[auto,',sloped]{$ε$}(d)
		;
		\draw[PMS3015]
			(a)edge[->]node[auto]{$1$}(w)
			(b)edge[->]node[auto]{$1$}(x)
		;
		\draw[PMS1245]
			(c)edge[->]node[auto]{$1/2$}(y)
			(c)edge[->]node[auto,pos=0.8]{$1/2$}(z)
			(d)edge[->]node[auto,pos=0.2]{$1/2$}(y)
			(d)edge[->]node[auto,']{$1/2$}(z)
		;
	\end{tikzpicture}	
	\caption{
		Illustration of $\BEC(ε)$ as undergoing ${\color{PMS3015}\BSC(0)}$ with
		frequency $1-ε$ and undergoing ${\color{PMS1245}\BSC(1/2)}$ with
		frequency $ε$.  Cf.\ \cite[Fig.~2.1]{LH06}.
	}\label{fig:BECasBSC}
\end{figure}

	Channels can have arbitrary output alphabets, but those that pose the same
	coding challenge are usually treated as the same.  An equivalence relation
	on the class of BMS channels is thus defined to identify and distinguish
	channels.

	We say that a BMS channel $W： \{0,1\} → 𝒵$ is a \emph{symbol aggregation}
	of another BMS channel  $V： \{0,1\} → 𝒴$ if there exists a map $π： 𝒴 → 𝒵$
	such that
	\begin{align*}
		V(y|0) : V(y|1) & = W(π(y)|0) : W(π(y)|1), \\
		∑_{z∈π^{-1}(z)} V(υ|0) + V(υ|1) & = W(z|0) + W(z|1)
	\end{align*}
	for all $y ∈ 𝒴$ and $z ∈ 𝒵$.  One sees that the purpose of $π$ is to
	identify symbols sharing the same likelihood ratio. Two BMS channels are
	said to be \emph{equivalent} if they share a common symbol aggregation.

	This equivalence relation on BMS channels extends to a partial ordering.  A
	BMS $W$ is said to be a \emph{degradation} of $V$ if $W$ can be obtained by
	post-processing the output of $V$.  (For instance, symbol aggregation counts
	as post-processing.)  It can be shown that $V$ and $W$ are equivalent iff
	$V$ is a degradation of $W$ and $W$ is a degradation of $V$.  For more on
	this viewpoint, see how to construct polar codes \cite{TV13}, how to deal
	with general alphabet \cite{GYB18}, how to describe input-degradation
	\cite{Nasser18}, and how output-degradation is used to achieve $μ = 2$
	within polynomial complexity \cite{GRY22}.
	
	Let $\BMS$ be the set of equivalence classes of BMS channels.  Let $\BMS*$
	be the set of equivalence classes excluding the noiseless channel ($W(y|0)
	W(y|1) = 0$ for all $y$) and the jammed channel ($W(y|0) = W(y|1)$ for all
	$y$).  What remain are the nontrivial channels where coding is meaningful.
	Later when Bhattacharyya parameter $Z$ is defined, one will see that $\BMS*$
	are channels with $Z(W) ∉ \{0,1\}$.

	Every BMS channel $W$ assumes a \emph{BSC-decomposition}
	\[W = ∑_j α_j \BSC(p_j),\]
	where $∑_j α_j = 1$ and $0 ≤ p_j ≤ 1/2$.  This notation means that $W$ can
	be simulated by (is equivalent to) the following procedure:
	\begin{itemize}
		\item select $\BSC(p_j)$ with probability $α_j$,
		\item reveal $p_j$, and
		\item feed the input into $\BSC(p_j)$ and reveal the BSC's output.
	\end{itemize}
	As an example, \cref{fig:BECasBSC} pictures the decomposition of $\BEC(ε)$
	into $(1-ε) \BSC(0) + ε \BSC(1/2)$.

	In general, the BSC-decomposition of a BMS channel $W： \{0,1\} → 𝒴$ can be
	obtained by the following procedure: First, aggregate all output symbols
	that share the same likelihood ratio.  Now that $W(y|0) : W(y|1)$ are all
	distinct for all $y ∈ 𝒴$, enumerate the output alphabet $𝒴 = \{ y₁, …,
	y_{|𝒴|} \}$, let $p_j ≤ 1/2$ be such that $1-p_j : p_j = W(y_j|0) :
	W(y_j|1)$ for all $y_j$ such that $W(y_j|0) ≥ W(y_j|1)$, and then let $α_j$
	be $W(y_j|0) + W(y_j|1)$.  For more on this topic, see \cite{GR20} and
	Modern Coding Theory \cite[Chapter~4]{RU08}.

\subsection{Bhattacharyya parameter}

	The \emph{Bhattacharyya parameter} of a BMS channel $W$ is denoted by
	$Z(W)$.  It is defined to be $Z(\BSC(p)) ≔ 2√{p¯p}$ for BSCs, where $¯p$
	means $1-p$.  And the definition extends to the entire $\BMS$ via linearity:
	\[Z \Bigl( ∑_j α_j \BSC(p_j) \Bigr)
	  ≔ ∑_j α_j Z(\BSC(p_j)) = ∑_j α_j √{p_j¯p_j}.\]
	This quantity can be seen as the expectation of the following random
	variable:
	\begin{itemize}
		\item select $\BSC(p_j)$ with probability $α_j$, and
		\item reveal $Z(\BSC(p_j))$, which is $2√{p_j¯p_j}$.
	\end{itemize}
	As an example, the Bhattacharyya parameter of $\BEC(ε) = ¯{ε} \BSC(0) + ε
	\BSC(1/2)$ is $¯{ε} Z(\BSC(0)) + ε Z(\BSC(1/2)) = ¯{ε}·0 + ε·1 = ε$.  The
	corresponding random variable follows the Bernoulli distribution with mean
	$ε$.

\subsection{Channel synthesis}

\begin{figure}\centering
	\begin{tikzpicture}
		\draw[xscale=4]
			(0,2)node(0){$0$}(1,2)node(a){$0$}(2,2)node(x){$0$}
			(0,0)node(1){$1$}(1,0)node(b){$1$}(2,0)node(y){$1$}
		;
		\draw[PMS3015]
			(0)edge[->]node[auto]{$1-p$}(a)
			(0)edge[->]node[auto,',pos=0.2]{$p$}(b)
			(1)edge[->]node[auto,pos=0.2]{$p$}(a)
			(1)edge[->]node[auto,']{$1-p$}(b)
		;
		\draw[PMS1245]
			(a)edge[->]node[auto]{$1-q$}(x)
			(a)edge[->]node[auto,',pos=0.2]{$q$}(y)
			(b)edge[->]node[auto,pos=0.2]{$q$}(x)
			(b)edge[->]node[auto,']{$1-q$}(y)
		;
	\end{tikzpicture}	
	\caption{
		Illustration of ${\color{PMS3015}\BSC(p)} Ｓ {\color{PMS1245}\BSC(q)}$,
		the serial combination of two BSCs.  A $0$ ends up flipped to $1$ with
		probability $p(1-q) + (1-p)q = p⋆q$.  Cf.\ \cite[Fig.~1.2]{LH06}.
	}\label{fig:BSC--BSC}
\end{figure}

\begin{figure}\centering
	\begin{tikzpicture}
		\draw[xscale=2]
							 (1,3)node(a){$0$}(3,3)node(w){$0$}(4,3)node(6){$00$}
			(0,2)node(0){$0$}(1,2)node(b){$1$}(3,2)node(x){$1$}(4,2)node(7){$11$}
			(0,1)node(1){$1$}(1,1)node(c){$0$}(3,1)node(y){$0$}(4,1)node(8){$01$}
							 (1,0)node(d){$1$}(3,0)node(z){$1$}(4,0)node(9){$10$}
		;
		\draw
			(0.east)edge[->]node[auto,sloped]{dup}(a)
			(0.east)edge[->](c)
			(1.east)edge[->](b)
			(1.east)edge[->]node[auto,',sloped]{dup}(d)
		;
		\draw[PMS3015]
			(a)edge[->]node[auto]{$1-p$}(w)
			(a)edge[->]node[auto,pos=0.8]{$p$}(x)
			(b)edge[->]node[auto,pos=0.2]{$p$}(w)
			(b)edge[->]node[auto,']{$1-p$}(x)
		;
		\draw[PMS1245]
			(c)edge[->]node[auto]{$1-q$}(y)
			(c)edge[->]node[auto,pos=0.8]{$q$}(z)
			(d)edge[->]node[auto,pos=0.2]{$q$}(y)
			(d)edge[->]node[auto,']{$1-q$}(z)
		;
		\draw
			(w)edge(6.west)(y)edge(6.west)
			(x)edge(7.west)(z)edge(7.west)
			(w)edge(8.west)(z)edge(8.west)
			(x)edge(9.west)(y)edge(9.west)
			(z)edge[opacity=0]node[opacity=1]{join}(9)
		;
	\end{tikzpicture}	
	\caption{
		Illustration of ${\color{PMS3015}\BSC(p)} Ｐ {\color{PMS1245}\BSC(q)}$,
		the parallel combination of two BSCs.  The output is conflictive ($01$
		or $10$) with probability $p⋆q$.  The output is consistent ($00$ or
		$11$) with probability $¬p⋆q¬$.  Cf.\ \cite[Fig.~1.3]{LH06}.
	}\label{fig:BSC||BSC}
\end{figure}

	We now define serial combinations and parallel combinations.  Readers are
	referred to \cite[Chapter~4]{RU08}, \cite{Arikan09}, and \cite{GR20} for
	more details.

	The \emph{serial combination} of two BMS channels $V$ and $W$ is denoted by
	$V Ｓ W$.  It is first defined for BSCs: $\BSC(p) Ｓ \BSC(q) ≔ \BSC(p⋆q)$,
	where $p⋆q ≔ p¯q + ¯pq$.  This new crossover probability satisfies $(¯p-p)
	(¯q-q) = ¬p⋆q¬ - p⋆q$, where $¬p⋆q¬ ≔ 1 - p⋆q = pq + ¯p¯q$.  See
	\cref{fig:BSC--BSC} for a picture.  Now extend the definition of serial
	combination to the whole $\BMS$ via bi-linearity:
	\begin{align*}
		～～&！！
		\Bigl( ∑_j α_j \BSC(p) \Bigr) Ｓ \Bigl( ∑_k β_k \BSC(q_k) \Bigr) \\
		& ≔∑_{jk} α_j β_k \BSC(p_j) Ｓ \BSC(q_k) \\
		& =∑_{jk} α_j β_k \BSC(p_j⋆q_k).
	\end{align*}
	When the two operands are equal, $W Ｓ W$ is also denoted by $W^ｓ$.

	The \emph{parallel combination} of two BMS channels $V$ and $W$ is denoted
	by $V Ｐ W$.  It is first defined for BSCs: $\BSC(p) Ｐ \BSC(q) ≔ p⋆q
	\BSC(÷{p¯q}{p⋆q}) + ¬p⋆q¬ \BSC(÷{pq}{¬p⋆q¬})$.  And then the definition is
	extended to the whole $\BMS$ via bi-linearity:
	\begin{align*}
		～～&！！
		\Bigl( ∑_j α_j \BSC(p) \Bigr) Ｐ \Bigl( ∑_k β_k \BSC(q_k) \Bigr) \\
		& ≔ ∑_{jk} α_j β_k \BSC(p_j) Ｐ \BSC(q_k) \\
		& = ∑_{jk} α_j β_k (p_j⋆q_k) \BSC \Bigl( ÷{p_j¯q_k}{p_j⋆q_k} \Bigr) \\
		&～ +∑_{jk} α_j β_k (¬p_j⋆q_k¬) \BSC \Bigl( ÷{p_jq_k}{¬p_j⋆q_k¬} \Bigr).
	\end{align*}
	When the two operands are equal, $W Ｐ W$ is also denoted by $W^ｐ$.

\subsection{Bhattacharyya equality}

	Bhattacharyya parameter is a special parameter in that parallel combination
	of channels translates to multiplication of $Z$'s.

	\begin{thm}\label{thm:parall-goto}
		For any BMS channel $W$,
		\[Z(VＰW) = Z(V) Z(W).\]
		In particular, $Z(W^ｐ) = Z(W)²$.
	\end{thm}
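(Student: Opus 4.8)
The plan is to reduce everything to the BSC case by bilinearity and then perform a direct computation. First I would write $W$ in its BSC-decomposition $W = \sum_k \beta_k \BSC(q_k)$ and $V$ in its BSC-decomposition $V = \sum_j \alpha_j \BSC(p_j)$. Since both $Z$ and $\mathord{\parall}$ are defined on all of $\BMS$ by (bi)linearity over such decompositions, it suffices to verify the identity when $V$ and $W$ are themselves BSCs, i.e.\ to show $Z(\BSC(p)\parall\BSC(q)) = Z(\BSC(p))\,Z(\BSC(q)) = 2\sqrt{p\bar p}\cdot 2\sqrt{q\bar q}$. Indeed, once the BSC case is known, expanding $Z(V\parall W) = \sum_{jk}\alpha_j\beta_k Z(\BSC(p_j)\parall\BSC(q_k)) = \sum_{jk}\alpha_j\beta_k Z(\BSC(p_j))Z(\BSC(q_k)) = \bigl(\sum_j\alpha_j Z(\BSC(p_j))\bigr)\bigl(\sum_k\beta_k Z(\BSC(q_k))\bigr) = Z(V)Z(W)$ finishes the general statement, and the ``in particular'' clause is the special case $V = W$.

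For the BSC computation, recall from the definition that
\[
	\BSC(p)\parall\BSC(q) = (p\star q)\,\BSC\Bigl(\frac{p\bar q}{p\star q}\Bigr) + \overline{p\star q}\,\BSC\Bigl(\frac{pq}{\overline{p\star q}}\Bigr).
\]
Applying $Z$, which acts linearly on such a mixture and sends $\BSC(r)$ to $2\sqrt{r\bar r}$, gives
\[
	Z(\BSC(p)\parall\BSC(q)) = (p\star q)\cdot 2\sqrt{\frac{p\bar q}{p\star q}\cdot\frac{\bar p q}{p\star q}} + \overline{p\star q}\cdot 2\sqrt{\frac{pq}{\overline{p\star q}}\cdot\frac{\bar p\bar q}{\overline{p\star q}}},
\]
where I used that the complement of $p\bar q/(p\star q)$ is $\bar p q/(p\star q)$ (their sum is $(p\bar q+\bar p q)/(p\star q) = 1$) and similarly the complement of $pq/\overline{p\star q}$ is $\bar p\bar q/\overline{p\star q}$. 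Each term simplifies: $(p\star q)\cdot 2\sqrt{p\bar q\bar p q}/(p\star q) = 2\sqrt{p\bar p q\bar q}$ and likewise $\overline{p\star q}\cdot 2\sqrt{pq\bar p\bar q}/\overline{p\star q} = 2\sqrt{p\bar p q\bar q}$, so the sum is $4\sqrt{p\bar p q\bar q} = 2\sqrt{p\bar p}\cdot 2\sqrt{q\bar q} = Z(\BSC(p))Z(\BSC(q))$, as desired.

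There is no serious obstacle here; the computation is routine. The one point that deserves a line of care is the degenerate case $p\star q = 0$ (equivalently $p = q = 0$, or one of the boundary combinations making a mixture weight vanish), where the conditional-probability expression $p\bar q/(p\star q)$ is formally $0/0$; but then that mixture component carries weight $0$ and contributes nothing to $Z$, so the identity still holds by continuity or by simply omitting the vacuous term. I would state the argument for generic $p,q$ and remark that the boundary cases follow by taking limits (or are immediate since both sides are continuous in $p$ and $q$). A one-sentence nod to the fact that $Z$ and $\parall$ are well-defined on equivalence classes — so the computation via any BSC-decomposition gives the same answer — rounds out the proof.
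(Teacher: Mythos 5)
Your proposal is correct and follows essentially the same route as the paper's proof: reduce to the BSC case via the BSC-decomposition and bilinearity, then compute $Z(\BSC(p)\parall\BSC(q))$ directly from the definition of $\parall$ and the linearity of $Z$. The only addition is your (sound, though not strictly necessary) remark about the degenerate case $p\star q\in\{0,1\}$, which the paper leaves implicit.
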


	\begin{proof}
		We first show that equality holds for $V$ and $W$ being BSCs.  Assume $V
		= \BSC(p)$ and $W = \BSC(q)$.  Then $V Ｐ W = p⋆q \BSC(÷{p¯q}{p⋆q}) +
		¬p⋆q¬ \BSC(÷{pq}{¬p⋆q¬})$.  The two component BSCs have Bhattacharyya
		parameters
		\[Z \Bigl( \BSC \Bigl( ÷{p¯q}{p⋆q} \Bigr) \Bigr)
		  = 2 √{ ÷{p¯q}{p⋆q} ¬\Bigl( ÷{p¯q}{p⋆q} \Bigr)¬ }
		  = ÷ { 2 √{p¯q¯pq}} {p⋆q} \]
		and
		\[Z \Bigl( \BSC \Bigl( ÷{pq}{¬p⋆q¬} \Bigr) \Bigr)
		  = 2 √{ ÷{pq}{¬p⋆q¬} ¬\Bigl( ÷{pq}{¬p⋆q¬} \Bigr)¬ }
		  = ÷ {2√{pq¯p¯q}} {¬p⋆q¬}.\]
		Overall, $\BSC(÷{p¯q}{p⋆q})$ is with weight $p⋆q$ so it contributes
		$2√{p¯pq¯q}$ to the Bhattacharyya parameter; $\BSC(÷{pq}{¬p⋆q¬})$ is
		with weight $¬p⋆q¬$ so it also contributes $2√{p¯pq¯q}$ to the
		Bhattacharyya parameter.  In sum, $Z(V Ｐ W) = 4√{p¯pq¯q} = Z(V) Z(W)$.

		The rest follows from the linearity of $Z$ and the bi-linearity of $Ｐ$
		in the BSC-decomposition.  More precisely, let $V$ and $W$ have
		BSC-decompositions $V = ∑_j α_j V_j$ and $W = ∑_k β_k W_k$, where
		$V_j$ and $W_k$ are BSCs.  Then $VＰW$ has BSC-decomposition $∑_{jk}
		α_j β_k V_j Ｐ W_k$ and Bhattacharyya parameter
		\begin{align*}
			∑_{jk} α_j β_k Z(V_j Ｐ W_k)
			& =∑_{jk} α_j β_k Z(V_j) Z(W_k) \\
			& =∑_j α_j Z(V_j) ∑_k β_k Z(W_k) \\
			& =Z(V) Z(W).
		\end{align*}
		This finishes the proof.
	\end{proof}

\section{Old Proof of \texorpdfstring{$μ ≤ 4.714$}{μ ≤ 4.714}}\label{sec:old}

	This section follows \cite{MHU16} and gives a self-contained proof of $μ ≤
	4.174$.

\subsection{Bhattacharyya inequalities}

	This subsection follows \cite[Exercise 4.62 (iv)]{RU08} and proves an
	inequality concerning Bhattacharyya parameters.

	Define a function $f： [0,1]² → [0,1]$ by
	\[f(x,y) ≔ √{x² + y² - x²y²}.\]

	\begin{lem}
		For $0 ≤ p,q ≤ 1$ we have
		\[f \bigl( Z(\BSC(p)), Z(\BSC(q)) \bigr) = Z(\BSC(p) Ｓ \BSC(q)).\]
	\end{lem}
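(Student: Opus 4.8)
The plan is to reduce the claimed identity to a single scalar computation. First I would use the definition of serial combination to simplify the right-hand side: since $\BSC(p) \serial \BSC(q) = \BSC(p \star q)$ with $p \star q = p\bar q + \bar p q$, we have $Z(\BSC(p) \serial \BSC(q)) = Z(\BSC(p \star q)) = 2\sqrt{(p\star q)(\overline{p\star q})}$. On the left-hand side I would substitute $Z(\BSC(p)) = 2\sqrt{p\bar p}$ and $Z(\BSC(q)) = 2\sqrt{q\bar q}$ into $f(x,y) = \sqrt{x^2 + y^2 - x^2 y^2}$, obtaining $f(Z(\BSC(p)),Z(\BSC(q))) = \sqrt{4p\bar p + 4q\bar q - 16\,p\bar p\, q\bar q}$. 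So the whole lemma comes down to the identity $4(p\star q)(\overline{p\star q}) = 4p\bar p + 4q\bar q - 16\,p\bar p\, q\bar q$ between the two radicands.

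The computation is cleanest after the change of variables $a \coloneqq \bar p - p = 1 - 2p$ and $b \coloneqq \bar q - q = 1 - 2q$, which linearizes both sides. Because $p + \bar p = 1$, I get $4p\bar p = (p + \bar p)^2 - (\bar p - p)^2 = 1 - a^2$, and likewise $4q\bar q = 1 - b^2$. For the right-hand side I would invoke the two relations already recorded in the text when serial combination was defined: $(\bar p - p)(\bar q - q) = \overline{p\star q} - p\star q$, i.e. $\overline{p\star q} - p\star q = ab$, together with $\overline{p\star q} + p\star q = 1$. Solving this $2 \times 2$ linear system gives $p\star q = (1 - ab)/2$ and $\overline{p\star q} = (1 + ab)/2$, hence $4(p\star q)(\overline{p\star q}) = (1-ab)(1+ab) = 1 - a^2 b^2$.

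It then only remains to expand the left-hand radicand in the same variables: $4p\bar p + 4q\bar q - 16\,p\bar p\, q\bar q = (1 - a^2) + (1 - b^2) - (1 - a^2)(1 - b^2) = 1 - a^2 b^2$, which matches exactly. Since both radicands lie in $[0,1]$, taking nonnegative square roots is legitimate and finishes the proof. I do not expect any genuine obstacle here — the argument is entirely elementary; the only bookkeeping subtlety is relating $p \star q$ to the bias product $ab = (1-2p)(1-2q)$, which is precisely why I route through the two quoted identities for $\star$ rather than grinding out $p\star q = p\bar q + \bar p q$ and its complement by hand.
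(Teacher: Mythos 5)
Your proof is correct, and it takes a recognizably different route from the paper's one-line chain. The paper works on the left-hand radicand directly: it inserts the factors $(q+\bar q)^2=1$ and $(p+\bar p)^2=1$ so that $4p\bar p + 4q\bar q - 16p\bar p q\bar q$ becomes $4\bigl(p\bar p(q+\bar q)^2 + (p+\bar p)^2 q\bar q - 4p\bar p q\bar q\bigr)$, expands, and spots the factorization into $4(p\bar q+\bar p q)(pq+\bar p\bar q) = 4(p\star q)\overline{p\star q}$. You instead pass to the biases $a=\bar p-p$ and $b=\bar q-q$, invoking the identity $\overline{p\star q}-p\star q=ab$ already recorded when $\star$ was defined, and show that each radicand equals $1-a^2b^2$. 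The bias substitution is the more structural argument: it makes the multiplicative behavior of $\star$ transparent at a glance, and it is the same change of variables the paper itself reaches for later in \cref{lem:z-in-z} (via $(\bar p-p)^2=1-x^2$). The paper's version is a shade shorter but depends on noticing the factorization; yours trades that for a systematic reduction. Both are complete and rigorous.
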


	\begin{proof}
		The left-hand side is
		\begin{align*}
			f( 2√{p¯p}, 2√{q¯q} )
			& =√{ 4p¯p + 4q¯q - 16p¯pq¯q } \\
			& =2√{ p¯p(q+¯q)² + (p+¯p)²q¯q - 4p¯pq¯q } \\
			& =2√{ (p¯q+¯pq) (pq+¯p¯q) } \\
			& =2√{ (p⋆q) (¬p⋆q¬) } \\
			& =Z(\BSC(p⋆q)), 
		\end{align*}
		which is equal to the right-hand side.
	\end{proof}

	A bi-variate function $f(x,y)$ is said to be \emph{bi-convex} if the
	function is convex in $x$ for any fixed $y$ and convex in $y$ for any fixed
	$x$.

	\begin{lem}
		$f(x,y)$ is bi-convex.
	\end{lem}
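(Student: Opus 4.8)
The plan is to use the symmetry $f(x,y)=f(y,x)$ to collapse bi-convexity into a single one-dimensional statement: for each fixed $y∈[0,1]$, the slice $x↦f(x,y)$ is convex on $[0,1]$. Convexity in the second argument is then obtained verbatim by swapping the roles of the two coordinates, so there is really only one thing to prove.

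For the slice, I would first rewrite the radicand so that the $x$-dependence is isolated: since $f(x,y)²=x²+y²-x²y²=(1-y²)x²+y²$, the slice is $x↦√{(1-y²)x²+y²}$, which has the form $x↦√{ax²+b}$ with $a=1-y²$ and $b=y²$, and crucially $a,b≥0$ because $y∈[0,1]$. The cleanest way to finish is to recognize $√{ax²+b}$ as the Euclidean norm of the vector $(√a\,x,\,√b)∈ℝ²$; since $x↦(√a\,x,\,√b)$ is affine and $\|\cdot\|₂$ is convex, the composition is convex, and we are done. If one prefers a hands-on computation, the second derivative of $g(x)≔√{ax²+b}$ works out to $ab\,(ax²+b)^{-3/2}≥0$ on the interior, with the degenerate values $b=0$ (slice $=√a\,|x|$) and $a=0$ (slice $=√b$, constant) handled by inspection.

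I do not expect a genuine obstacle here — the statement is essentially a reformulation away from being obvious. The only points that need care are (i) verifying $1-y²≥0$, which is exactly where the hypothesis $y∈[0,1]$ is used to make the coefficient of $x²$ nonnegative, and (ii) the boundary cases $y=0$ and $y=1$, where a second-derivative argument either breaks down (non-differentiability at $x=0$) or becomes vacuous; the norm-composition argument avoids (ii) altogether, so that is the route I would write up.
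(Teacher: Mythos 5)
Your proof is correct, and your primary argument (affine map composed with the Euclidean norm) is a genuinely different and arguably cleaner route than the paper's. The paper computes the second partial derivative $\partial^2 f/\partial x^2 = y^2(1-y^2)/f(x,y)^3$ directly, observes it is nonnegative for $0<y\leq1$, and then must handle $y=0$ as a separate case (there $f=\sqrt{x^2}=|x|$, which is convex but has a non-differentiable kink at $x=0$); symmetry then handles the $y$-direction, exactly as you do. Your observation that the slice is $\sqrt{ax^2+b}=\bigl\lVert(\sqrt{a}\,x,\sqrt{b})\bigr\rVert_2$ with $a=1-y^2\geq0$, $b=y^2\geq0$ sidesteps the differentiability issue entirely: convexity of a norm precomposed with an affine map is a one-liner, and no case split at $y=0$ or $x=0$ is needed. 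Your fallback second-derivative computation $ab(ax^2+b)^{-3/2}$ is, after the substitution, identical to the paper's formula, so the two write-ups agree on the computational core. The norm-composition route buys robustness at the boundary and avoids a case analysis; the paper's route is more elementary in the sense of requiring nothing beyond calculus.
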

	\begin{proof}
		Take the second derivative of $f$ in $x$:
		\[÷{∂²f}{∂x²} (x,y) = ÷ {y²(1-y^2)} {f(x,y)³}.\]
		This fraction is well-defined and nonnegative when $0 < y ≤ 1$.  Along
		the $y = 0$ segment, $f$ evaluates to $√{x²}$ and this is convex in $x$.
		Therefore $f$ is convex in $x$ for any fixed $y$.  For convexity in the
		$y$-direction we invoke symmetry.  This finished the proof
	\end{proof}

	\begin{thm}\label{thm:serial-under}
		For $V,W ∈ \BMS$ we have
		\[Z(V Ｓ W) ≥ f(Z(V), Z(W)).\]
		Equality holds when $V$ and $W$ are BSCs.
	\end{thm}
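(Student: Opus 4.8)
The plan is to reduce the general BMS inequality to the already-established BSC equality by decomposing $V$ and $W$ into BSCs and then applying Jensen's inequality, using the bi-convexity of $f$. Write the BSC-decompositions $V = \sum_j \alpha_j \BSC(p_j)$ and $W = \sum_k \beta_k \BSC(q_k)$. By the bi-linearity of $\serial$, the serial combination $V \serial W$ has BSC-decomposition $\sum_{jk} \alpha_j \beta_k \BSC(p_j \star q_k)$, so by linearity of $Z$ and the first lemma of this subsection,
\[
	Z(V \serial W) = \sum_{jk} \alpha_j \beta_k Z(\BSC(p_j \star q_k))
		= \sum_{jk} \alpha_j \beta_k f\bigl( Z(\BSC(p_j)), Z(\BSC(q_k)) \bigr).
\]
Meanwhile $Z(V) = \sum_j \alpha_j Z(\BSC(p_j))$ and $Z(W) = \sum_k \beta_k Z(\BSC(q_k))$ are convex combinations of the per-component Bhattacharyya parameters, so the right-hand side $f(Z(V), Z(W))$ is $f$ evaluated at these two averages.

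The key step is then to bound $f$ at the average below by the average of $f$. I would do this in two one-variable applications of Jensen's inequality, licensed by the bi-convexity established in the previous lemma. Fixing the second argument at the (constant) value $Z(W) = \sum_k \beta_k z_k$ where $z_k \coloneqq Z(\BSC(q_k))$, convexity of $x \mapsto f(x, Z(W))$ gives
\[
	f\bigl( Z(V), Z(W) \bigr) \le \sum_j \alpha_j\, f\bigl( Z(\BSC(p_j)), Z(W) \bigr).
\]
Then, for each fixed $j$, convexity of $y \mapsto f(Z(\BSC(p_j)), y)$ gives $f(Z(\BSC(p_j)), Z(W)) \le \sum_k \beta_k\, f(Z(\BSC(p_j)), z_k)$. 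Substituting and combining the double sum yields $f(Z(V), Z(W)) \le \sum_{jk} \alpha_j \beta_k f(Z(\BSC(p_j)), z_k)$, which is exactly $Z(V \serial W)$ by the identity above. The equality claim is immediate: if $V$ and $W$ are themselves BSCs, the decompositions are trivial (single terms), both Jensen steps are equalities, and the inequality collapses to the first lemma.

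I expect the only real subtlety to be bookkeeping around the edge of the domain, since the second derivative formula $\partial^2 f/\partial x^2 = y^2(1-y^2)/f(x,y)^3$ from the preceding lemma degenerates when $f(x,y) = 0$ (i.e.\ $x = y = 0$) or when $y = 1$; but the previous lemma already handles convexity on the boundary segment $y = 0$ by the direct computation $f(x,0) = \sqrt{x^2}$, and at $y = 1$ one has $f(x,1) = 1$ identically, which is trivially convex, so bi-convexity holds on all of $[0,1]^2$ and Jensen applies without caveat. No compactness or continuity argument beyond this is needed, so the proof is essentially the two-line decomposition plus the two-step Jensen argument.
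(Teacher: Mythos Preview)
Your proposal is correct and essentially identical to the paper's proof: both reduce to BSCs via the decomposition, identify $Z(V \serial W) = \sum_{jk} \alpha_j \beta_k f(Z(V_j), Z(W_k))$, and then apply Jensen twice using bi-convexity through the common intermediate $\sum_j \alpha_j f(Z(V_j), Z(W))$. The paper phrases the two Jensen steps via independent random variables $X, Y$ and the chain $\mathbb{E}f(X,Y) \geq \mathbb{E}f(X,\mathbb{E}Y) \geq f(\mathbb{E}X,\mathbb{E}Y)$, but this is the same argument read from the other end.
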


	\begin{proof}
		Let $V$ and $W$ have BSC-decompositions $∑_j α_j V_j$ and $∑_k β_k
		W_k$, respectively, where $V_j$ and $W_k$ are BSCs.  Then $V Ｓ W$ has
		BSC-decomposition $∑_{jk} α_j β_k V_j Ｓ W_k$ and Bhattacharyya
		parameter
		\[∑_{jk} α_j β_k Z(V_j Ｓ W_k) = ∑_{jk} α_j β_k f(Z(V_j), Z(W_k)).\]
		Let $X$ be a random variable that takes value $Z(V_j)$ with probability
		$α_j$.  Let $Y$ be an independent random variable that takes value
		$Z(W_k)$ with probability $β_k$.  Now we want to show
		\[Z(V Ｓ W) = 𝔼f(X,Y) ≥ f(𝔼X,𝔼Y) = f(Z(V),Z(W)).\]
		The left-hand side is greater than or equal to $𝔼f(X,𝔼Y)$ because $f$
		is convex in $y$ for each $x = Z(V_j)$.  The right-hand side is less
		than or equal to $𝔼f(X,𝔼Y)$ because $f$ is convex in $x$ for a fixed
		$y=  𝔼Y$.  This finishes the proof.
	\end{proof}

	An interesting consequence of the preceding argument is that the upper bound
	on $Z(VＳW)$ follows consequently.

	\begin{cor}\label{cor:serial-over}
		For any $V,W∈\BMS$, we have
		\[Z(VＳW) ≤ Z \bigl( \BEC(Z(V)) Ｓ \BEC(Z(W)) \bigr).\]
		Equality holds when $V$ and $W$ are BECs.
	\end{cor}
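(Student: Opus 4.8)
The strategy is to recycle the probabilistic picture from the proof of \cref{thm:serial-under}, but to use the convexity of $f$ for an \emph{upper} estimate rather than a lower one. Write the BSC-decompositions $V = \sum_j \alpha_j V_j$ and $W = \sum_k \beta_k W_k$ with $V_j, W_k$ BSCs, and let $X$ and $Y$ be independent with $\mathbb{P}(X = Z(V_j)) = \alpha_j$ and $\mathbb{P}(Y = Z(W_k)) = \beta_k$; then $X, Y$ take values in $[0,1]$, $\mathbb{E}X = Z(V)$, $\mathbb{E}Y = Z(W)$, and, exactly as before, $Z(V \serial W) = \mathbb{E}\,f(X, Y)$.

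First I would replace $f$ by its chord in the first coordinate. Since $f(\cdot, y)$ is convex on $[0,1]$ (the bi-convexity lemma above) and $f(0,y) = y$, $f(1,y) = 1$, we have $f(x,y) \le (1-x)\,y + x$ on the whole square; averaging over $X$ gives $\mathbb{E}_X f(X,y) \le \bigl(1 - Z(V)\bigr) y + Z(V)$. The crucial observation is that, setting $g(y) \coloneqq \bigl(1 - Z(V)\bigr) y + Z(V)$, the function $g$ is affine, so averaging over $Y$ is now exact:
\[
 Z(V \serial W) = \mathbb{E}_Y\,\mathbb{E}_X f(X,Y) \le \mathbb{E}_Y\, g(Y) = g\bigl(Z(W)\bigr) = Z(V) + Z(W) - Z(V)\,Z(W).
\]
It remains to recognize the right-hand side. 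From $\BEC(\varepsilon) = \overline{\varepsilon}\,\BSC(0) + \varepsilon\,\BSC(1/2)$, bi-linearity of $\serial$, and the identities $0 \star 0 = 0$ and $0 \star \tfrac12 = \tfrac12 \star \tfrac12 = \tfrac12$, one gets $\BEC(\varepsilon_1) \serial \BEC(\varepsilon_2) = \BEC(\varepsilon_1 + \varepsilon_2 - \varepsilon_1\varepsilon_2)$, whose Bhattacharyya parameter equals its erasure probability $\varepsilon_1 + \varepsilon_2 - \varepsilon_1\varepsilon_2$; plugging in $\varepsilon_1 = Z(V)$ and $\varepsilon_2 = Z(W)$ closes the inequality. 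For the equality clause, if $V$ and $W$ are BECs then $X$ and $Y$ are supported on $\{0, 1\}$, where the chord bound $f(x,y) \le (1-x)y + x$ holds with equality; hence both averaging steps are equalities.

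I do not expect a real obstacle here: the claim is essentially a corollary of the method already deployed, and the text signals as much. The one point that needs care is the \emph{asymmetric} treatment of the two arguments — one uses the convex chord only in $x$ and then obtains the $y$-average for free, precisely because the intermediate bound $g$ is affine; bounding both coordinates by chords would give a strictly weaker estimate, and bounding neither would not close the gap at all. A minor secondary check is the arithmetic identity $\BEC(\varepsilon_1) \serial \BEC(\varepsilon_2) = \BEC(\varepsilon_1 + \varepsilon_2 - \varepsilon_1\varepsilon_2)$ that lets one read the bound as a statement about erasure channels.
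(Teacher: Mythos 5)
Your proof is correct but takes a genuinely different route from the paper's. The paper applies Karamata's inequality: with the means $\mathbb{E}X=Z(V)$ and $\mathbb{E}Y=Z(W)$ held fixed, bi-convexity of $f$ makes $\mathbb{E}f(X,Y)$ increase under majorization, and the most majorized law on $[0,1]$ with a given mean is Bernoulli---which is exactly the BSC-weight distribution of a BEC, whence the extremizers are BECs. You instead cap $f$ by its chord in the $x$-coordinate, note that the resulting bound $(1-x)y+x$ is affine in $y$ so the $Y$-average is exact, and thereby land on the closed form $Z(V)+Z(W)-Z(V)Z(W)$, which you then recognize as $Z\bigl(\BEC(Z(V))\,\serial\,\BEC(Z(W))\bigr)$. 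Your route is more elementary---no appeal to majorization---and produces the numerical value of the bound before identifying the extremizing channel; the paper's route identifies the extremizing channel first and never writes the bound out explicitly. One small correction to your closing commentary: the chord-in-$x$ bound $(1-x)y+x = x+y-xy$ is already the bilinear interpolant of $f$ through the four corners of $[0,1]^2$ (since $f(x,0)=x$ and $f(x,1)=1$), so taking a further chord in $y$ is an identity, not a loss; bounding both coordinates by chords reproduces the same estimate rather than a strictly weaker one.
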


	\begin{proof}
		Continue the notation from the previous proof.  Now we vary the random
		variables $X$ and $Y$ but fix their expectations.  Then $𝔼f(X,Y)$
		varies while $f(𝔼X,𝔼Y)$ remains unchanged.  By Karamata's inequality,
		a corollary of Jensen's inequality, $𝔼f(X,Y)$ becomes larger when $X$
		and $Y$ becomes more marjorized.  The most marjorized random variables
		taking values in $[0,1]$ are those that can only be $0$ or $1$.  Those
		correspond to the BSC-decompositions of BECs, which consist of
		$\BSC(0) = \BSC(1)$ (with Bhattacharyya parameter $0$) and $\BSC(1/2)$
		(with Bhattacharyya parameter $1$).  Therefore, $Z(VＳW)$ is maximized
		when $V$ and $W$ are BECs.  This finishes the proof.
	\end{proof}

	\begin{cor}
		For any BMS channel $W$ with $z = Z(W)$,
		\begin{align*}
			z√{2-z²} ≤ Z(W^ｓ) & ≤ 2z-z², \\
			           Z(W^ｐ) & = z².
		\end{align*}
	\end{cor}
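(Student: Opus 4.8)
The plan is to specialize the three preceding results—\cref{thm:parall-goto}, \cref{thm:serial-under}, and \cref{cor:serial-over}—to the case $V = W$, so that the only real work is evaluating the relevant one-variable functions at the diagonal point $(z,z)$. The equality $Z(W^\para) = z^2$ is immediate: it is exactly the ``in particular'' clause of \cref{thm:parall-goto}. For the lower bound on $Z(W^\seri)$, I would apply \cref{thm:serial-under} with $V = W$ to get $Z(W^\seri) \geq f(z,z)$, and then simply compute $f(z,z) = \sqrt{z^2 + z^2 - z^4} = \sqrt{z^2(2 - z^2)} = z\sqrt{2 - z^2}$, using $z \in [0,1]$ so the square root of $z^2$ is $z$.

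For the upper bound, I would apply \cref{cor:serial-over} with $V = W$ to obtain $Z(W^\seri) \leq Z\bigl(\BEC(z) \Ｓ \BEC(z)\bigr)$, and then I need to know the Bhattacharyya parameter of the serial combination of a BEC with itself. Here I would use that $\BEC(\varepsilon)$ has BSC-decomposition $\bar\varepsilon\,\BSC(0) + \varepsilon\,\BSC(1/2)$ (recorded in \cref{sec:pre}), so by bi-linearity of $\Ｓ$,
\[
\BEC(z) \Ｓ \BEC(z)
= \bar z^2\,\BSC(0\star 0) + 2z\bar z\,\BSC(0\star\tfrac12) + z^2\,\BSC(\tfrac12\star\tfrac12),
\]
and since $0 \star \tfrac12 = \tfrac12$ and $\tfrac12 \star \tfrac12 = \tfrac12$ while $0 \star 0 = 0$, this collapses to $\bar z^2\,\BSC(0) + (1 - \bar z^2)\,\BSC(1/2) = \BEC(1 - \bar z^2)$. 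Hence $Z(\BEC(z)\Ｓ\BEC(z)) = 1 - \bar z^2 = 1 - (1-z)^2 = 2z - z^2$, which is the claimed bound. (Equivalently, one can note that a serial combination of BECs is a BEC and that erasure probabilities compose as $\varepsilon_1 \star' \varepsilon_2 = \varepsilon_1 + \varepsilon_2 - \varepsilon_1\varepsilon_2$, which for $\varepsilon_1 = \varepsilon_2 = z$ gives $2z - z^2$.)

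None of the steps is a genuine obstacle; the only point requiring the slightest care is the identification $\BEC(z)\Ｓ\BEC(z) = \BEC(2z - z^2)$, i.e.\ checking that the three BSC pieces produced by bi-linearity regroup correctly into a single BEC. Everything else is substitution into formulas already established in the excerpt. I would write the argument as three short displayed computations, one per line of the corollary's statement.
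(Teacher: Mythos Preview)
Your proposal is correct and matches the paper's intent: the corollary is stated there without proof, as an immediate specialization of the three results just above it (\cref{thm:parall-goto}, \cref{thm:serial-under}, \cref{cor:serial-over}) to the diagonal $V=W$. The only extra computation you supply---working out $Z(\BEC(z)\serial\BEC(z))=2z-z^2$ via the BSC-decomposition---is exactly the routine check the paper leaves to the reader.
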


\subsection{Eigenfunction and eigenvalue}

\begin{figure}
	\pgfmathdeclarefunction{h_quadratic}1{%
		\ifdim#1pt=0pt%
			\def\pgfmathresult{0}%
		\else\ifdim#1pt=100pt%
			\def\pgfmathresult{0}%
		\else%
			\pgfmathsetmacro\a{#1*(10-#1)}%
			\pgfmathparse{\a^.78*((#1)^2+150)/500}%
		\fi\fi
	}
	\begin{tikzpicture}\centering
		\draw plot[domain=0:10,samples=500](\x*.8,{h_quadratic(\x)});
	\end{tikzpicture}
	\caption{
		$x^{0.78} (1-x)^{0.78} (2x²+3)$, a smooth instance of eigenfunction that
		induces supremum of ratios $0.87$ and overestimate $μ < 5$.
	}\label{fig:h-quadratic}
\end{figure}

	Let $h： [0,1] → ℝ$ be a concave function such that $h(0) = h(1) = 0$ but
	positive elsewhere.  An overestimate of the scaling exponent can be obtained
	via the following relation
	\[λ ≔ \sup_{W∈\BMS*} ÷{h(Z(W^ｐ)) + h(Z(W^ｓ))} {2h(Z(W))}
	    ≥ 2^{-1/μ}. \label{sup:h(Z)}\]
	Recall that $\BMS*$ is the collection of all equivalence classes of BMS
	channels where $0 < Z(W) < 1$.

	To see why the quotient governs the scaling behavior, note that the
	``eigenvalue'' $λ$ is accumulative when we consider $W$'s children,
	grandchildren, grand-grandchildren, and so on.  To be more precise, we have
	\[h(Z(W^ｓ)) + h(Z(W^ｐ)) ≤ 2λ h(Z(W))\]
	and
	\begin{align*}
		～～&！！
		h(Z(W^{ｓｓ})) + h(Z(W^{ｓｐ}) + h(Z(W^{ｐｓ})) + h(Z(W^{ｐｐ})) \\
		& ≤ 2 λ h(Z(W^ｓ)) + 2 λ h(Z(W^ｐ)) \\
		& ≤ 4 λ h(Z(W)).
	\end{align*}
	And it is not hard to imagine
	\begin{align*}
		～～&！！
		h(Z(W^{ｓｓｓ})) + \dotsb + h(Z(W^{ｐｐｐ})) \\
		& ≤ 2 λ h(Z(W^{ｓｓ}))+\dotsb + 2 λ h(Z(W^{ｐｐ})) \\
		& ≤ 4 λ² h(Z(W^ｓ)) + 4 λ² h(Z(W^ｐ)) \\
		& ≤ 8 λ³ h(Z(W)).
	\end{align*}
	In general, when we consider all descendants $W ^ {?₁ ?₂ \dotso ?_n}$ at the
	$n$th generation, the average of $h(Z(W ^ {?₁ ?₂ \dotso ?_n}))$ cannot
	exceed $λ^n h(Z(W))$.  This quantity is exponentially small.  This implies
	that the $Z$ of deep enough descendants are generally very close to $0$ or
	to $1$, hence the polarization phenomenon.

	In our proof of $μ ≤ 4.63$, we will use eigenvalue to infer the scaling
	exponent without elaborating on the gap to capacity of an actual polar code.
	For the machinery that translates the eigenvalue into the asymptotic
	behavior of polar codes, see \cite{MHU16} or \cite[Sections
	2.4--2.6]{Chilly}.

	Since we know $Z(W^ｐ) = Z(W)²$ and we know how to bound $Z(W^ｓ)$ using
	functions in $Z(W)$, \cref{sup:h(Z)} assumes a simpler expression:
	\[\sup_{0<x<1\vphantom{√{x²}}}\; \sup_{x√{2-x²}≤y≤2x-x²}
	  ÷{h(x²)+h(y)}{2h(x)}.\label{sup:h(y)}\]
	As an example, $h(x) ≔ x^{0.78} (1-x)^{0.78} (2x²+3)$ leads to a supremum of
	$0.87$ and an upper bound of $μ ≤ 4.98$.  This eigenfunction is plotted in
	\cref{fig:h-quadratic}.

\subsection{Power iteration}

	To obtain a good function $h$ that minimizes \cref{sup:h(Z),sup:h(y)}---and
	thereby minimizing the overestimate of $μ$---consider the following
	inductive assignment:
	\begin{align*}
		h₀(x)      & ≔ x^{0.78} (1-x)^{0.78} (2x²+3), \\
		h_{n+1}(x) & ≔ \sup_{x√{2-x²}≤y≤2x-x²} ÷ {h_n(x²)+h_n(y)} {2\max h_n}.
	\end{align*}
	This is very similar to power iteration, an algorithm that approximates the
	longest eigenvalue of a square matrix.  For this reason $h$ is analogously
	called an \emph{eigenfunction} and quotients of the form $(h+h) / 2h$ are
	called \emph{eigenvalues}.

	It is unlikely that $h_n$ has a simple algebraic formula for large $n$.
	To proceed, one puts several ticks on $[0,1]$ 
	\[L ≔ \Bigl\{ 0, ÷1{ℓ}, …, ÷{ℓ-1}{ℓ}, 1 \Bigr\}\]
	and let $H ∈ ℝ^{ℓ+1}$ be an array parametrized by $L$.  The idea is to use
	$\LI (L,H)$ as a substitute of $h$ both during power iteration and when we
	want to overestimate $μ$.

	So we let a computer execute the following program.
	\[\left|～\begin{tabular}{l}
		For all $x∈L$: \\
		～～ $H[x] ← x^{0.78} (1-x)^{0.78} (2x²+3)$; \\
		Loop until $H$ converges: \\
		～～ $h ← \LI (L,H)$; \\
		～～ For all $x∈L$: \\
		～～～～ $H'[x] ← \dfrac {h(x²)+h(y(H,x))} {2h(x)\max H}$; \\
		～～ $H ← H'$;
	\end{tabular}\right.\]
	Here,
	\begin{itemize}
		\item $H'$ is an auxiliary array that holds the new content of $H$;
		\item $h： [0,1] → ℝ$ is a function such that $h(x) = H[x]$ for $x ∈ L$
		      and linearly interpolated for $x ∉ L$;
		\item $y(H,x)$ is the argument $y$ that maximizes $h(y)$
			  over the range $x√{2-x^2} ≤ y ≤ 2x-x²$.
	\end{itemize}
	We remark that there is an easy, i.e., $O(1)$, implementation of $y(H,x)$:
	\[y(H,x)≔\begin{cases*}
		x√{2-x²}  & if $x√{2-x²} ≥ \argmax H$, \\
		2x - x²   & if $2x-x² ≤ \argmax H$, \\
		\argmax H & otherwise.
	\end{cases*}\label{for:arg-may}\]
	This implementation is sound if $h$ is unimodal.  This might not be the case
	halfway the power iteration; but it deals no damage as long as $H$ converges
	and induces a good bound.

	Empirically, $H$ converges fast.  About $200$ iterations is enough to make
	$H$ and $H'$ differ by $10^{-15}$.  As a comparison, IEEE 754's
	double-precision floating-point format has $53$ significant bits (including
	the implicit leading $1$) and a relative precision of $2.22·10^{-16}$.

	Now that $H$ converges, let $ˆH$ be the limit of $H$ and let $ˆh$ be
	$\LI(L,ˆH)$.  An empirical upper bound of $μ$ is obtained by
	\[\biggl( -㏒₂ \max_{x∈L\setminus\{0,1\}}
	          ÷ {ˆh(x²)+ˆh(y(ˆH,x))} {2h(x)} \biggr) ^{-1}.\label{for:log2(H)}\]
	Per our computation, $ℓ = 2·10⁵ $ gives the first four digits ($4.695$)
	mentioned in \cite{MHU16} (wherein $ℓ = 10⁶$).

	We also tested using a variant of Chebyshev nodes as $L$:
	\[L ≔ \Bigl\{ ÷{1-\cos(θ)}2
	      \Bigm| θ = 0, ÷1{ℓ}π, …, ÷{ℓ-1}{ℓ}π, π \Bigr\}. \label{for:cheby}\]
	The motivation behind Chebyshev nodes is that they pay more attentions to
	the two ends of the interval, the places where $h(x)$ becomes small and more
	precisions are needed.  We found that $ℓ=2·10³$ gives the first four digits
	($4.695$), which indicates that Chebyshev nodes is superior than evenly
	spaced ticks.

\subsection{Foot of the mountain}

	Having an array $ˆH$ of evaluations, one would ask if $ˆh ≔ \LI(L,H)$ is a
	proper substitute of the eigenfunction in the manner of whether
	\[μ ≤ \biggl( -㏒₂ \max_{0<x<1} ÷{ˆh(x²)+ˆh(y(ˆH,x))}{2h(x)} \biggr)^{-1}\]
	gives a finite upper bound.  Unfortunately, no.  When $x$ is in $[0, 1/2ℓ]$
	or in $[1-1/2ℓ, 1]$, the interpolant is locally linear and the quotient
	$(ˆh(x²) + ˆh(2x-x²)) / 2ˆh(x)$ is constantly $1$ (whereas we want it to be
	strictly less than $1$).

	In \cite[Section~III.C]{MHU16}, it is explained how to manipulate $ˆh$ to
	obtain a proper eigenfunction that gives a more rigorous bound on the
	eigenvalue.  The strategy is to let $δ$ be a tiny number; and let $ˆh(x)$ be
	$x^{0.78}$ when $x≤δ$ and be $(1-x)^{0.78}$ when $x≥1-δ$.  This way, the
	quotients for $0<x<δ$ and for $1-δ<x<1$ are uniformly bounded from above.
	For $δ≤x≤1-δ$, since the denominator $2h(x)$ is far away from $0$, rounding
	error and sampling error can be controlled if we evaluate the quotient at a
	sufficiently fine set of points.

	This type of function surgery is limited to very tiny neighborhoods $[0, δ]$
	and $[1-δ, 1]$ of $0$ and $1$, respectively.  Hence it shall not affect the
	eigenvalue too much.  As an example, the empirical estimate obtained by
	\cref{for:log2(H)} is $4.695$; and the rigorous value reported in
	\cite{MHU16} is $μ ≤ 4.714$.  These two numbers are only $0.4\%$ apart.

	For our new overestimate of $μ$, we will skip the surgery step and use
	\cref{for:log2(H)}, the maximum over a discrete but very fine lattice, as
	an upper bound on the scaling exponent.

\subsection{Road map to a better bound}
	
	While taking \cref{sup:h(Z),sup:h(y)}, $y$ ranges over an interval
	$[x√{2-x²},2x-x²]$ where the left endpoint is tight if $W$ is a BSC and the
	right endpoint is tight if $W$ is a BEC.  If $W$ is a BEC, then all
	descendants of $W$ are BECs and $2x-x²$ is always tight.
	
	On the contrary, if $W$ is a BSC, the left endpoint is only tight for now.
	After one parallel combination, $W^ｐ$ will no longer be a BSC, and $x
	√{2-x²}$ will not be tight anymore.  That is to say, there is always a tiny
	gap between $Z(W^{ｐｓ})$ and $Z(W^ｐ) √{2 - Z(W^ｐ)²}$.  If we can come up
	with a better lower bound than $x √{2-x²}$, then \cref{sup:h(y)} will be
	taken over a smaller region, which makes it smaller.

	The next section finds the better bound.

\section{Tri-variate Channel Transformation}\label{sec:tri}

	Consider the channel combination $(U Ｐ V) Ｓ W$.  See
	\cref{fig:BSC||BSC--BSC} for a visualization.  Define a function $g： [0,1]³
	→ [0,1]$ that satisfies
	\[g(Z(U), Z(V), Z(W)) = Z((U Ｐ V) Ｓ W)\]
	for all $U,V,W$ that are BSCs.  We can write $g$ more explicitly with the
	help of the following lemmas.

\subsection{Tri-variate Bhattacharyya function}

\begin{figure*}\centering
	\begin{tikzpicture}
		\draw[xscale=2]
							 (1,3)node(a){$0$}(3,3)node(w){$0$}(4,3)node(6){$00$}
			(0,2)node(0){$0$}(1,2)node(b){$1$}(3,2)node(x){$1$}(4,2)node(7){$11$}
			(0,1)node(1){$1$}(1,1)node(c){$0$}(3,1)node(y){$0$}(4,1)node(8){$01$}
							 (1,0)node(d){$1$}(3,0)node(z){$1$}(4,0)node(9){$10$}
			(6,3)node(A){00}
			(6,2)node(B){11}
			(6,1)node(C){01}
			(6,0)node(D){10}
		;
		\draw
			(0.east)edge[->]node[auto,sloped]{dup}(a)
			(0.east)edge[->](c)
			(1.east)edge[->](b)
			(1.east)edge[->]node[auto,',sloped]{dup}(d)
		;
		\draw[PMS3015]
			(a)edge[->]node[auto]{$1-p$}(w)
			(a)edge[->]node[auto,pos=0.8]{$p$}(x)
			(b)edge[->]node[auto,pos=0.2]{$p$}(w)
			(b)edge[->]node[auto,']{$1-p$}(x)
		;
		\draw[PMS1245]
			(c)edge[->]node[auto]{$1-q$}(y)
			(c)edge[->]node[auto,pos=0.8]{$q$}(z)
			(d)edge[->]node[auto,pos=0.2]{$q$}(y)
			(d)edge[->]node[auto,']{$1-q$}(z)
		;
		\draw
			(w)edge(6.west)(y)edge(6.west)
			(x)edge(7.west)(z)edge(7.west)
			(w)edge(8.west)(z)edge(8.west)
			(x)edge(9.west)(y)edge(9.west)
			(z)edge[opacity=0]node[opacity=1]{join}(9)
		;
		\draw[PMS144]
			(6)edge[->]node[auto]{$1-r$}(A)
			(6)edge[->]node[auto,pos=0.8]{$r$}(B)
			(7)edge[->]node[auto,pos=0.2]{$r$}(A)
			(7)edge[->]node[auto,']{$1-r$}(B)
			(8)edge[->]node[auto]{$1-r$}(C)
			(8)edge[->]node[auto,pos=0.8]{$r$}(D)
			(9)edge[->]node[auto,pos=0.2]{$r$}(C)
			(9)edge[->]node[auto,']{$1-r$}(D)
		;
	\end{tikzpicture}	
	\caption{
		Illustration of $({\color{PMS3015}\BSC(p)} Ｐ {\color{PMS1245}\BSC(q)})
		Ｓ {\color{PMS144}\BSC(r)}$, The output is conflictive ($01$ or $10$)
		with probability $p⋆q$.  The output is consistent ($00$ or $11$) with
		probability $¬p⋆q¬$.
	}\label{fig:BSC||BSC--BSC}
\end{figure*}

	\begin{lem}[Trivariate $Z$]\label{lem:tri-z}
		$(\BSC(p) Ｐ \BSC(q)) Ｓ \BSC(r)$ has Bhattacharyya parameter
		\[2√{ (p¯q¯r+¯pqr) (p¯qr+¯pq¯r) } + 2√{ (pq¯r+¯p¯qr) (pqr+¯p¯q¯r) }.\]
	\end{lem}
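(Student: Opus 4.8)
The plan is to compute the Bhattacharyya parameter of $(\BSC(p) Ｐ \BSC(q)) Ｓ \BSC(r)$ by working through the two combining operations in sequence and tracking the BSC-decomposition at each stage. First I would recall from the definition of parallel combination that $\BSC(p) Ｐ \BSC(q) = (p⋆q) \BSC(\frac{p¯q}{p⋆q}) + (¬p⋆q¬) \BSC(\frac{pq}{¬p⋆q¬})$, so the channel $\BSC(p) Ｐ \BSC(q)$ already is its own BSC-decomposition with two components. Then I would apply serial combination with $\BSC(r)$ to each component separately, using bi-linearity of $Ｓ$ and the rule $\BSC(a) Ｓ \BSC(r) = \BSC(a⋆r)$. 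This yields a BSC-decomposition of $(\BSC(p) Ｐ \BSC(q)) Ｓ \BSC(r)$ with two components: weight $p⋆q$ on $\BSC\bigl(\frac{p¯q}{p⋆q} ⋆ r\bigr)$ and weight $¬p⋆q¬$ on $\BSC\bigl(\frac{pq}{¬p⋆q¬} ⋆ r\bigr)$.

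Next I would invoke the definition of $Z$ via linearity over the BSC-decomposition, so that the answer is $(p⋆q)\, Z\bigl(\BSC(\frac{p¯q}{p⋆q}⋆r)\bigr) + (¬p⋆q¬)\, Z\bigl(\BSC(\frac{pq}{¬p⋆q¬}⋆r)\bigr)$, where $Z(\BSC(s)) = 2√{s¯s}$. The main computational step is then to simplify the first term $(p⋆q)\cdot 2√{a'⋆r\cdot ¬a'⋆r¬}$ with $a' = \frac{p¯q}{p⋆q}$. Here I would use the identity recorded in the preliminaries, $(¯a - a)(¯r - r) = ¬a⋆r¬ - a⋆r$, together with the fact that $a⋆r$ and $¬a⋆r¬$ sum to $1$, so $a⋆r\cdot ¬a⋆r¬ = \frac14\bigl(1 - (¯a-a)^2(¯r-r)^2\bigr)$; but it is cleaner to expand $a'⋆r = a'¯r + ¯{a'}r$ directly, multiply through by $p⋆q$ to clear the denominator inside the square root (noting $(p⋆q)\sqrt{\cdot} = \sqrt{(p⋆q)^2\,\cdot}$), and observe that $(p⋆q)\, a' = p¯q$ and $(p⋆q)\,¯{a'} = (p⋆q) - p¯q = ¯pq$. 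This turns the first term into $2√{(p¯q¯r + ¯pqr)(p¯qr + ¯pq¯r)}$ after recognizing $(p⋆q)^2 a'⋆r\cdot ¬a'⋆r¬ = (p¯q¯r + ¯pqr)(p¯qr + ¯pq¯r)$ by pairing the "conflict with $r$" and "agree with $r$" products. The second term is handled identically with $p⋆q$ replaced by $¬p⋆q¬$, $p¯q$ replaced by $pq$, and $¯pq$ replaced by $¯p¯q$, giving $2√{(pq¯r + ¯p¯qr)(pqr + ¯p¯q¯r)}$.

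The step I expect to be the main obstacle is the bookkeeping in that last simplification: verifying that $(p⋆q)^2\,(a'⋆r)(¬a'⋆r¬)$ equals exactly the claimed product $(p¯q¯r + ¯pqr)(p¯qr + ¯pq¯r)$ without sign errors, since $a'⋆r$ and $¬a'⋆r¬$ each become a sum of two terms after clearing the denominator and one must correctly match which cross-terms carry a factor $r$ versus $¯r$. I would double-check this by noting the symmetry $p \leftrightarrow q$ (the operation $Ｐ$ is symmetric in its arguments) and by a sanity check at $r = 0$, where the expression should collapse to $2√{p¯q\cdot ¯pq} + 2√{pq\cdot ¯p¯q} = 4√{p¯pq¯q} = Z(\BSC(p))Z(\BSC(q))$, consistent with \cref{thm:parall-goto}; and at $q = 0$, where it should reduce to $Z(\BSC(p) Ｓ \BSC(r)) = 2√{(p⋆r)(¬p⋆r¬)}$, matching the serial rule. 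These checks pin down the constants and confirm the formula.
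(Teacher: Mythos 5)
Your proposal is correct and takes essentially the same route as the paper: write out the BSC-decomposition of $\BSC(p)\parall\BSC(q)$, serially combine each component with $\BSC(r)$ using $\BSC(a)\serial\BSC(r)=\BSC(a\star r)$, apply linearity of $Z$, and clear the denominators $p\star q$ and $\overline{p\star q}$ inside the square roots. One caveat on the side remark: your second sanity check has the degenerate limits swapped. Setting $q=0$ makes $\BSC(q)$ noiseless, so $\BSC(p)\parall\BSC(0)$ is noiseless and $(\BSC(p)\parall\BSC(0))\serial\BSC(r)=\BSC(r)$; your formula does in fact collapse to $2\sqrt{r\bar r}=Z(\BSC(r))$ there, not to $Z(\BSC(p)\serial\BSC(r))$. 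The check you intended lives at $q=1/2$, where $\BSC(p)\parall\BSC(1/2)\cong\BSC(p)$ and the formula reduces to $2\sqrt{(p\star r)(\overline{p\star r})}$ as desired. This does not affect the validity of the main computation.
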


	\begin{proof}
		$\BSC(p) Ｐ \BSC(q)$ is, by definition, $p⋆q \BSC(÷{p¯q}{p⋆q}) + ¬p⋆q¬
		\BSC(÷{pq}{¬p⋆q¬})$.  When this channel is serially-combined with a
		$\BSC(r)$, the first summand becomes
		\[p⋆q \BSC \Bigl( ÷{p¯q}{p⋆q}⋆r \Bigr)
		  =p⋆q \BSC \Bigl( ÷{p¯q¯r+¯pqr}{p⋆q} \Bigr)\]
		and contributes Bhattacharyya parameter
		\[令÷\tfrac
			2 (p⋆q) √{ ÷{p¯q¯r+¯pqr}{p⋆q} ¬\bigl(÷{p¯q¯r+¯pqr}{p⋆q}\bigr)¬ }
			= 2 √{ (p¯q¯r+¯pqr) (p¯qr+¯pq¯r) }.\]
		The second summand becomes
		\[¬p⋆q¬ \BSC \Bigl( ÷{pq}{¬p⋆q¬}⋆r \Bigr)
		  = ¬p⋆q¬ \BSC \Bigl( ÷{pq¯r+¯p¯qr}{¬p⋆q¬} \Bigr)\]
		and contributes Bhattacharyya parameter
		\[令÷\tfrac
			2 ¬p⋆q¬ √{ ÷{pq¯r+¯p¯qr}{¬p⋆q¬} ¬\bigl(÷{pq¯r+¯p¯qr}{¬p⋆q¬}\bigr)¬ }
			= 2√{ (pq¯r+¯p¯qr) (pqr+¯p¯q¯r) }.\]
		This finishes the proof.
	\end{proof}

	\begin{lem}[$Z$ in terms of $Z$'s]\label{lem:z-in-z}
		\[g(x,y,z) = √{C+D} + √{C-D} = √{2C + √{C²-D²}},\]
		where
		\begin{align*}
			C & ≔ ÷14 (x²y² + ¬x²¬z² + ¬y²¬z²), \\
			D & ≔ ÷12 √{¬x²¬} · √{¬\smash y²¬}·z².
 		\end{align*}
	\end{lem}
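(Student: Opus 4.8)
The plan is to reduce everything to \cref{lem:tri-z} and simplify. Write $\bar p = 1-p$, $\bar q = 1-q$, $\bar r = 1-r$. By the definition of $g$, for $x = 2\sqrt{p\bar p}$, $y = 2\sqrt{q\bar q}$, $z = 2\sqrt{r\bar r}$ with $p,q,r \in [0,1/2]$ --- which parametrizes all of $(x,y,z)\in[0,1]^3$ --- the quantity $g(x,y,z)$ is the Bhattacharyya parameter of $(\BSC(p)Ｐ\BSC(q))Ｓ\BSC(r)$, so \cref{lem:tri-z} gives $g(x,y,z) = 2\sqrt{A_1A_2} + 2\sqrt{B_1B_2}$ where $A_1 = p\bar q\bar r + \bar p q r$, $A_2 = p\bar q r + \bar p q\bar r$, $B_1 = pq\bar r + \bar p\bar q r$, $B_2 = pqr + \bar p\bar q\bar r$. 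Once the first claimed equality $g = \sqrt{C+D}+\sqrt{C-D}$ is in hand, the remaining nested-radical form follows by squaring and simplifying, so the substance is to show that $4A_1A_2$ and $4B_1B_2$ equal $C-D$ and $C+D$.

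First I would multiply out the two products of binomials; the cross terms telescope, leaving
\begin{gather*}
	A_1A_2 = r\bar r\,(p^2\bar q^2+\bar p^2 q^2) + p\bar p q\bar q\,(r^2+\bar r^2), \\
	B_1B_2 = r\bar r\,(p^2 q^2+\bar p^2\bar q^2) + p\bar p q\bar q\,(r^2+\bar r^2),
\end{gather*}
so $A_1A_2$ and $B_1B_2$ share their second summand and differ only through $u = p^2\bar q^2+\bar p^2 q^2$ versus $v = p^2 q^2+\bar p^2\bar q^2$. Next I would pass to $x,y,z$ using $4p\bar p = x^2$ (hence $p\bar p q\bar q = x^2y^2/16$), $r^2+\bar r^2 = 1-2r\bar r = 1-z^2/2$, $p^2+\bar p^2 = 1-x^2/2$, and the key identity $(\bar p - p)^2 = 1-4p\bar p = 1-x^2$. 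Because a BSC-decomposition uses $p,q,r\le 1/2$, we get $\bar p - p = \sqrt{1-x^2}\ge 0$, and likewise for $q$ and $r$; therefore $u+v = (p^2+\bar p^2)(q^2+\bar q^2) = (1-\tfrac{x^2}2)(1-\tfrac{y^2}2)$ while $u-v = (p^2-\bar p^2)(\bar q^2-q^2) = -\sqrt{1-x^2}\,\sqrt{1-y^2}$, which determines $u$ and $v$.

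Substituting back, the summand common to $4A_1A_2$ and $4B_1B_2$ collapses --- the $x^2y^2z^2$ terms cancel --- to $\tfrac14\bigl(x^2y^2 + (1-x^2)z^2 + (1-y^2)z^2\bigr) = C$, while the differing summand is $-\tfrac12\sqrt{1-x^2}\,\sqrt{1-y^2}\,z^2 = -D$ for $4A_1A_2$ and $+D$ for $4B_1B_2$. Hence $4A_1A_2 = C-D$, $4B_1B_2 = C+D$, and $g(x,y,z) = \sqrt{C-D}+\sqrt{C+D}$, the first claimed equality. (All radicands are nonnegative, since $C\ge D\ge 0$ follows from $A_1,A_2,B_1,B_2\ge 0$; and which radical carries $+D$ does not matter, as $\sqrt{C+D}+\sqrt{C-D}$ is symmetric under $D\mapsto-D$.)

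I expect the only real difficulty to be bookkeeping --- executing the two four-term expansions and the $x,y,z$ substitution without arithmetic slips --- together with the single content-bearing subtlety that the sign $\bar p - p = +\sqrt{1-x^2}$ relies on the convention $p_j\le1/2$ in the BSC-decomposition. Everything else is routine algebra.
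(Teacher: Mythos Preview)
Your proposal is correct and follows essentially the same route as the paper: invoke \cref{lem:tri-z}, expand the two products $A_1A_2$ and $B_1B_2$, pass to $x,y,z$ via $4p\bar p=x^2$, $r^2+\bar r^2=1-z^2/2$, and $(\bar p-p)^2=1-x^2$, then identify the results with $C\mp D$. You are in fact slightly more careful than the paper in flagging that the sign $\bar p-p=+\sqrt{1-x^2}$ hinges on the convention $p\le 1/2$; the paper uses this silently.
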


	\begin{proof}
		Let $x$, $y$, and $z$ be $2√{p¯p}$, $2√{q¯q}$, and $2√{r¯r}$,
		respectively, for some $0 ≤ p,q,r ≤ 1/2$.  From \cref{lem:tri-z},
		$g(x,y,z)$ is $2√A + 2√B$, where
		\begin{align*}
			A & ≔ (p¯q¯r+¯pqr) (p¯qr+¯pq¯r) \\
			  & = p¯q¯rp¯qr + p¯q¯r¯pq¯r + ¯pqrp¯qr + ¯pqr¯pq¯r \\
			  & = p²¯q²r¯r + p¯pq¯q¯r² + p¯pq¯qr² + ¯p²q²r¯r, \\
			  & = p¯pq¯q(r²+¯r²) + (p²¯q²+¯p²q²)r¯r \\
			\shortintertext{and}
			B & ≔ (pq¯r+¯p¯qr) (pqr+¯p¯q¯r) \\
			  & = pq¯rpqr + pq¯r¯p¯q¯r + ¯p¯qrpqr + ¯p¯qr¯p¯q¯r \\
			  & = p²q²r¯r + p¯pq¯q¯r² + p¯pq¯qr² + ¯p²¯q²r¯r \\
			  & = p¯pq¯q(r²+¯r²) + (p²q²+¯p²¯q²)r¯r.
		\end{align*}
		To show $C+D = 4A$ and $C-D = 4B$, it suffices to show $2(A+B) = C$ and
		$2(A-B) = D$.  For the former,
		\begin{align*}
			2(A+B)
			& = 2\left(\begin{array}{l}
			    p¯pq¯q(r²+¯r²) + (p²¯q²+¯p²q²)r¯r \\
			    {} + p¯pq¯q(r²+¯r²) + (p²q²+¯p²¯q²)r¯r
			  \end{array}\right) \\
			& = 4p¯pq¯q(r²+¯r²) + 2(p²+¯p²)(q²+¯q²)r¯r \\
			& = ÷14 x²y² \Bigl(1-÷{z²}2\Bigr)
			  + ÷12 \Bigl( 1-÷{x²}2 \Bigr) \Bigl( 1-÷{y²}2 \Bigr) z² \\
			& = C.
		\end{align*}
		The third equality makes use of the rewriting rules $4r¯r=z²$ and
		$r² + ¯r² = (r+¯r)² - 2r¯r = 1 - z²/2$.  For the latter,
		\begin{align*}
			2(A-B)
			& = 2\left(\begin{array}l
			     p¯pq¯q(r²+¯r²)+(p²q²+¯p²¯q²)r¯r \\
			     {}-p¯pq¯q(r²+¯r²)-(p²¯q²+¯p²q²)r¯r
			    \end{array}\right) \\
			& = 2 (¯p²-p²) (¯q²-q²) r¯r \\
			& = 2 (¯p-p) (¯q-q) r¯r \\
			& = ÷12 √{1-x²} · √{1-y²}·z² \\
			& = D.
		\end{align*}
		The fourth equality makes use of the rewriting rule $(¯p-p)² = (¯p+p)² -
		4¯pp = 1-x²$.  In conclusion, we have $√{4A} + √{4B} = √{C+D} + √{C-D} =
		√{ \bigl( √{C+D} + √{C-D} \bigr)² } = √{ 2C + 2√{C²-D²} }$.  This
		finishes the proof.
	\end{proof}

	A tri-variate function is said to be \emph{tri-convexity} if it is convex
	whenever any two arguments are fixed and the other argument is varying.  If
	$g$ happens to be tri-convex, we will be able to show that $Z((U Ｐ V) Ｓ W)$
	is lower bounded by $g (Z(U), Z(V), Z(W))$ by the same Jensen-argument as in
	\cref{thm:serial-under}.  Unfortunately, $g$ is not tri-convex.  The next
	subsection will find a workaround to this.

\subsection{Lower tri-convex envelope}

	$g$ as defined above is not convex in any of the three variables.  We thus
	attempt to find a lower bound of $g$ that is tri-convex so that Jensen's
	inequality applies.  Consider a function $˘g： [0,1]³ → [0,1]$ that reads
	\[˘g(x,y,z) ≔ \sup \{ θ(x,y,z) ｜ θ ≤ g † and is tri-convex† \},\]
	where the supremum runs over all functions $θ： [0,1]³ → [0,1]$ that are
	tri-convex and pointwise bound $g$ from below.  This is very similar to the
	definition of the lower convex envelope, the difference being that $θ$ is
	not convex but tri-convex.  (An example is that $x y z$ is tri-convex but
	not convex.)  We will refer to $˘g$ as the \emph{envelope} of $g$.

	\begin{thm}[Counterpart of \cref{thm:serial-under}]\label{thm:serial-novel}
		For $U,V,W ∈ \BMS$,
		\[Z((U Ｐ V) Ｓ W) ≥ ˘g \bigl( Z(U), Z(V), Z(W) \bigr).\]
		In particular, if $W = V^ｐ$ for some $V ∈ \BMS$,
		\[Z(W^ｓ) ≥ ˘g \bigl( √{Z(W)}, √{Z(W)}, Z(W) \bigr).\]
	\end{thm}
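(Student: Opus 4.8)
The plan is to mimic the Jensen-style argument used in the proof of \cref{thm:serial-under}, but with the tri-convex envelope $˘g$ replacing the bi-convex $f$. First I would decompose $U$, $V$, and $W$ into BSCs, writing $U = ∑_i α_i U_i$, $V = ∑_j β_j V_j$, $W = ∑_k γ_k W_k$ with all $U_i, V_j, W_k$ being BSCs. Using the bi-linearity of $Ｐ$ and $Ｓ$, the combined channel $(U Ｐ V) Ｓ W$ has a BSC-decomposition indexed by triples $(i,j,k)$ with weights $α_i β_j γ_k$, and by \cref{lem:z-in-z} each component contributes $g(Z(U_i), Z(V_j), Z(W_k))$ — since $g$ is by definition the function that computes the Bhattacharyya parameter exactly on BSC inputs. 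Hence $Z((U Ｐ V) Ｓ W) = 𝔼\, g(X, Y, Z)$ where $X, Y, Z$ are the independent random variables taking values $Z(U_i), Z(V_j), Z(W_k)$ with the respective probabilities.

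Next I would peel off the expectations one coordinate at a time. Since $˘g ≤ g$ pointwise, $𝔼\, g(X,Y,Z) ≥ 𝔼\, ˘g(X,Y,Z)$. Then, because $˘g$ is tri-convex — in particular convex in its third argument for each fixed pair $(X,Y)$ — Jensen's inequality gives $𝔼\, ˘g(X,Y,Z) ≥ 𝔼\, ˘g(X, Y, 𝔼Z)$. Repeating with the second argument (convexity in $y$ for fixed $x$ and fixed last slot $𝔼Z$) yields $≥ 𝔼\, ˘g(X, 𝔼Y, 𝔼Z)$, and once more in the first argument yields $≥ ˘g(𝔼X, 𝔼Y, 𝔼Z) = ˘g(Z(U), Z(V), Z(W))$. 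Chaining these inequalities proves the first display. For the specialization, set $U = V$, so $U Ｐ V = U^ｐ = W$, hence $(U Ｐ V) Ｓ W = W^ｓ$; and $Z(U) = Z(V) = √{Z(U^ｐ)} = √{Z(W)}$ by \cref{thm:parall-goto}, so the general bound instantiates to $Z(W^ｓ) ≥ ˘g(√{Z(W)}, √{Z(W)}, Z(W))$.

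The main subtlety — the one place where the analogy with \cref{thm:serial-under} needs care — is the order in which Jensen is applied and the fact that tri-convexity of $˘g$ is exactly what licenses each step: after fixing a value in one slot, we still have an honest one-variable convex function in the next slot, so each conditional application of Jensen is valid. One should be slightly careful that the intermediate quantities such as $˘g(X, 𝔼Y, 𝔼Z)$ are still (random) functions of $X$ taking values in the domain $[0,1]^3$, which holds because expectations of $[0,1]$-valued variables stay in $[0,1]$. No convexity of $g$ itself is needed anywhere; we only ever use that $˘g ≤ g$ and that $˘g$ is tri-convex, both of which are built into its definition as the lower tri-convex envelope. The result for BSCs is consistent since there the decomposition is trivial (single terms) and $g = ˘g$ need not hold, but the inequality $g ≥ ˘g$ still gives the stated bound without claiming equality.
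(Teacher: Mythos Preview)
Your proof is correct and follows essentially the same route as the paper: decompose $U,V,W$ into BSCs, express $Z((U Ｐ V) Ｓ W)$ as a triple average of $g$-values, pass from $g$ down to its tri-convex envelope, and peel off the three expectations one at a time via Jensen. The only cosmetic difference is that the paper fixes an arbitrary tri-convex minorant $θ ≤ g$, runs the Jensen argument for that $θ$, and then takes the supremum over all such $θ$ at the end, whereas you apply Jensen directly to $˘g$ (which is legitimate since a pointwise supremum of tri-convex functions is again tri-convex); the specialization to $W = V^ｐ$ is handled identically.
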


	\begin{proof}
		For the former, it suffices to prove $Z((U Ｐ V) Ｓ W) ≥ θ \bigl(
		Z(U),Z(V),Z(W) \bigr)$ for all tri-convex $θ$ that is also $≤ g$
		pointwise.  Fix a $θ$.  When $U,V,W$ are BSCs, the inequality we want to
		prove holds:
		\begin{align*}
			Z((U Ｐ V) Ｓ W)
			& = g(Z(U), Z(V), Z(W)) \\
			& ≥ θ(Z(U), Z(V), Z(W)).
		\end{align*}
		Now consider BSC-decompositions $U = ∑_i α_i u_i$ and $V = ∑_j β_j
		V_j$ and $W = ∑_k γ_k W_k$, where $U_i, V_j, W_k$ are BSCs.  Then $(U
		Ｐ V) Ｓ W$ becomes $∑_{ijk} α_i β_j γ_k (U_i Ｐ V_j) Ｓ W_k$, thereby
		having Bhattacharyya parameter
		\begin{align*}
			Z((U Ｐ V) Ｓ W) 
			& =∑_{ijk}α_i β_j γ_k Z((U_i Ｐ V_j) Ｓ W_k))	 \\
			& =∑_{ijk}α_i β_j γ_k g(Z(U_i), Z(V_j), Z(W_k)) \\
			& ≥∑_{ijk}α_i β_j γ_k θ(Z(U_i), Z(V_j), Z(W_k)) \\
			& ≥∑_{ij}α_i β_j θ (Z(U_i), Z(V_j), Z(W)) \\
			& ≥∑_i α_i θ(Z(U_i), Z(V), Z(W)) \\
			& ≥ θ (Z(U), Z(V), Z(W)).
		\end{align*}
		This finishes the proof of the lower bound on $Z((U Ｐ V) Ｓ W)$.  For
		the lower bound on $Z(W^ｓ)$, plug in $U = V$ and $W = V^ｐ$, and use
		the fact that $Z(W) = Z(V)²$.
	\end{proof}

\subsection{Approximate the envelop \texorpdfstring{$˘g$}{˘g}}

	Computing the envelop $˘g$ algebraically does not seem plausible nor
	possible.  Our approach is to approximate $˘g$ numerically over a mesh
	\[M ≔ \Bigl\{ 0, ÷1n, …, ÷{n-1}n, 1 \Bigr\}³ ⊆ [0,1]³.\]
	Here, $n$ is the resolution; say $n = 200$.  We next evaluate $g$ at this
	mesh and run a program that iteratively lowers any evaluation that breaks
	tri-convexity.

	In detail, let $G ∈ ℝ ^ {(n+1) × (n+1) × (n+1)}$ be an $(n+1) × (n+1) ×
	(n+1)$ array indexed by $M$.  Initialize $G$ as
	\[G[x,y,z] ← g(x,y,z)\]
	for all $(x,y,z) ∈ M$.  We call $G$ the data points.  If the following
	does not hold for some $(x,y,z) ∈ M$ and $x ≠ 0,1$:
	\[2G[x,y,z] ≤ G \Bigl[ x-÷1n, y, z \Bigr]
	            + G \Bigl[ x+÷1n, y, z \Bigr], \label{cri:convex}\]
	we say that the data point at $(x,y,z)$ is \emph{breaking the convexity
	along the $x$-direction}.  To correct that, we update this data point as
	follows
	\[G[x,y,z] ← ÷12 G \Bigl[ x-÷1n, y, z \Bigr]
	           + ÷12 G \Bigl[ x+÷1n, y, z \Bigr].  \label{for:descend}\]
	We also demand the convexity in $y$-direction and $z$-direction:
	\begin{align*}
		2G[x,y,z] ≤ G \Bigl[ x,y-÷1n,z \Bigr]
		          + G \Bigl[ x,y+÷1n,z \Bigr], \label{cri:convey}\\
		2G[x,y,z] ≤ G \Bigl[ x,y,z-÷1n \Bigr]
		          + G \Bigl[ x,y,z+÷1n \Bigr] \,\phantom. \label{cri:convez}
	\end{align*}
	If not, we update $G [x,y,z]$ similarly.

	We synthesize a program that constantly searches for instances of data
	points that break the convexity in any of the three directions and keeps
	lowering data points.  Below is the program; let us call it $\TC$:
	\[\left|～\begin{tabular}{l}
		For all $(x,y,z)∈M$: \\
		～～ $G[x] ← g(x,y,z)$; \\
		Loop until $G$ converges: \\
		～～ For all $(x,y,z) ∈ M$: \\
		～～～～ If \cref{cri:convex} fails: \\
		～～～～～～ Update via \cref{for:descend} \\
		～～～～ If \cref{cri:convey} fails: \\
		～～～～～～ Update similarly; \\
		～～～～ If \cref{cri:convez} fails: \\
		～～～～～～ Update similarly; \\
	\end{tabular}\right.\]

	It will stop when all three criteria are met modulo rounding error.
	Empirically, $G$ converges; mathematically, we can also prove that $G$
	converges.

	\begin{pro}\label{pro:converge}
		$\TC$ makes $G$ converge.  For any mesh point $(x,y,z) ∈ M$, the data
		point $G[x,y,z]$ converges to
		\[ˇG[x,y,z] ≔ \sup\{ Θ[x,y,z] ｜ Θ ≤ G † and is tri-convex† \}.\]
		The supremum is over all arrays $Θ ∈ ℝ^{(n+1) × (n+1) × (n+1)}$ that
		satisfy the discrete convexity criteria
		\cref{cri:convex,cri:convey,cri:convez} and $Θ ≤ G$ entry-wise.
	\end{pro}

	\begin{proof}
		$Θ ≡ 0$ is a lower bound on $G$; it remains to be a lower bound after an
		update of data point.  Thus $G$ keeps decreasing but stays nonnegative.
		By the monotone convergence theorem, $G$ converges.  Let $ˇG$ be the
		limit of $G$ after any order of updates.  It must be tri-convex because
		any data point that violates convexity should have been updated.

		Now notice that any tri-convex lower bound $Θ ≤ G$ remains to be a lower
		bound on $G$ after an update of $G$.  So any such $Θ$ maintains to be a
		lower bound on $ˇG$.  This means that $ˇG$ is greater than or equal to
		the supremum of all such $Θ$'s.  But $ˇG$ is itself a tri-convex lower
		bound of $G$ so $ˇG$ is equal to the supremum; the supremum is a
		maximum.
	\end{proof}

	Hereafter, $ˇG$ denotes both the empirical end result of $\TC$ and the
	supremum defined in \cref{pro:converge}.  We call $ˇG$ the \emph{discrete
	envelop} in contrast to the ``continuous'' envelop $˘g$.

	\begin{lem}\label{lem:tri-convex}
		$\LI (M, ˇG)$ is tri-convex if the data points $ˇG$ satisfy the discrete
		convexity \cref{cri:convex,cri:convey,cri:convez}.
	\end{lem}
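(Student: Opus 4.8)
The plan is to unfold the definition of trilinear interpolation and reduce tri-convexity to the three discrete second-difference inequalities \cref{cri:convex,cri:convey,cri:convez} that are assumed. By the symmetry of the hypothesis in the three coordinates, it suffices to prove that the restriction of $h ≔ \LI(M,\check{G})$ to an arbitrary line parallel to the $x$-axis is convex; the $y$- and $z$-directions are identical after relabeling.

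Fix $y_0,z_0 ∈ [0,1]$ and set $ψ(x) ≔ h(x,y_0,z_0)$ on $[0,1]$. First I would record two structural facts about $\LI(M,\check{G})$: on each mesh cube $[\tfrac in,\tfrac{i+1}n]×[\tfrac jn,\tfrac{j+1}n]×[\tfrac kn,\tfrac{k+1}n]$ the interpolant is affine in $x$ once $y$ and $z$ are held fixed (trilinear interpolation is of the form $α(y,z)+β(y,z)\,x$ there), and $h$ is continuous across cube faces. Hence $ψ$ is continuous and piecewise linear with breakpoints contained in $\{0,\tfrac1n,…,1\}$. A continuous piecewise-linear function on $[0,1]$ with nodes on this uniform grid is convex iff its second difference is nonnegative at every interior node, i.e.\ iff $2ψ(\tfrac in) ≤ ψ(\tfrac{i-1}n)+ψ(\tfrac{i+1}n)$ for $i=1,…,n-1$, since the left and right slopes at $\tfrac in$ are $n(ψ(\tfrac in)-ψ(\tfrac{i-1}n))$ and $n(ψ(\tfrac{i+1}n)-ψ(\tfrac in))$ and convexity is their comparison.

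Next I would compute $ψ$ at a grid abscissa. Choosing $j,k$ with $y_0∈[\tfrac jn,\tfrac{j+1}n]$, $z_0∈[\tfrac kn,\tfrac{k+1}n]$ and letting $w^y_0,w^y_1≥0$ and $w^z_0,w^z_1≥0$ be the barycentric weights of $y_0$ and $z_0$ in those intervals (so $w^y_0+w^y_1=w^z_0+w^z_1=1$), the value of the trilinear interpolant at a point whose $x$-coordinate is on the grid reduces to the bilinear interpolation of the four corner values:
\[ψ\Bigl(\tfrac in\Bigr)=∑_{b,c∈\{0,1\}}w^y_b\,w^z_c\,\check{G}\Bigl[\tfrac in,\tfrac{j+b}n,\tfrac{k+c}n\Bigr],\]
and the same identity holds at $\tfrac{i-1}n$ and $\tfrac{i+1}n$ with the same weights. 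Forming the convex combination, against the nonnegative weights $w^y_b w^z_c$ (which sum to $1$), of the inequality \cref{cri:convex} applied at each point $(\tfrac in,\tfrac{j+b}n,\tfrac{k+c}n)$ with $\tfrac in≠0,1$ yields exactly $2ψ(\tfrac in)≤ψ(\tfrac{i-1}n)+ψ(\tfrac{i+1}n)$. This proves $ψ$ convex, hence $h$ convex along the $x$-direction; the $y$- and $z$-directions follow from \cref{cri:convey,cri:convez} by the identical computation. A few minor points get dispatched along the way: when $y_0$ or $z_0$ is itself a grid value, the two cubes sharing that face produce the same $ψ$, so the choice of $j$ or $k$ is immaterial; and a kink of $ψ$ at $x=0$ or $x=1$ does not violate convexity on $[0,1]$, which is exactly why \cref{cri:convex} is imposed only for $x≠0,1$.

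I do not anticipate a genuine obstacle here. The only step requiring care is the bookkeeping that the univariate restriction is \emph{exactly} piecewise linear with nodes on the mesh and that its second difference at a node equals the $w^y_b w^z_c$-weighted average of the corresponding three-dimensional discrete second differences; once this elementary identity is written out, the statement follows immediately from the hypothesis on $\check{G}$.
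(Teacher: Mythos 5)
Your proposal is correct and takes essentially the same route as the paper: restrict the interpolant to a cardinal line, observe that the restriction is continuous piecewise linear with nodes on the uniform mesh, and then verify convexity by showing the discrete second differences at interior nodes are nonnegative, which follows by taking a nonnegative (bilinear) weighted average of the assumed inequality \cref{cri:convex}. The paper writes this out informally via a $2\times3$ toy grid and then asserts that the general case follows by the same argument (treating convexity within a cell as trivial since the restriction there is affine and convexity across cell boundaries via the interpolation weights), whereas you make the reduction to the one-dimensional second-difference criterion explicit and carry through the barycentric-weight bookkeeping directly in three dimensions; the content is the same, and your write-up is a cleaner and fully general version of the paper's sketch.
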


	Here, $\LI(M,ˇG)： [0,1]³ → ℝ$ is a function that evaluates to $ˇG [x,y,z]$
	at $(x,y,z) ∈ M$, and is tri-linearly interpolated if $(x,y,z) ∉ M$.  A
	defining feature of multi-linear interpolation is that it is piecewise
	linear in any cardinal direction.

	\begin{proof}[Proof of the lemma]
		\arraycolsep1pt
		We shall prove this for a two dimensional $2 × 3$ grid; the general
		statement follows by a generalization of this argument.

		Let there be six numbers on a grid
		\[\begin{matrix}
			a & -\!- & b & -\!- & c \\
			| &      & | &      & | \\
			d & -\!- & e & -\!- & f
		\end{matrix}\label{gri:abcdef}\]
		such that $a + c ≥ 2b$ and $d + f ≥ 2e$, i.e., the data points are
		convex.  Let $ˇg$ be obtained by bi-linear interpolation such that
		\[\begin{matrix}
			ˇg(-1,1) & -\!- & ˇg(0,1) & -\!- & ˇg(1,1) \\
			   |     &      &    |    &      &    |    \\
			ˇg(-1,0) & -\!- & ˇg(0,0) & -\!- & ˇg(1,0)
		\end{matrix}\]
		corresponds to \cref{gri:abcdef}.

		We claim that $ˇg$ is convex at $(0,0)$ in the $x$-direction, that is,
		$ˇg(-ε,0) + ˇg(ε,0) ≥ 2g(0,0)$ for $0 ≤ ε ≤ 1$.  This is because
		\[ˇg(-ε,0) + ˇg(ε,0) = εd + ¯{ε}e+εf + ¯{ε}e ≥ 2e.\]
		Similarly, $ˇg$ is convex at $(0,1)$ in the $x$ direction, that is,
		$ˇg(-ε,1) + ˇg(ε,1) ≥ 2g(0,1)$.

		Now we claim that $ˇg$ is convex at $(0,y)$, where $0≤y≤1$, in the
		$x$-direction.  That is to say, $ˇg(-ξ,y) + ˇg(ξ,y) ≥ 2g(0,y)$ for $0 ≤
		ξ ≤ 1$.  This is because
		\begin{align*}
			～～&！！
			ˇg(-ξ,y) + ˇg(ξ,y) \\
			& = yˇg(-ξ,1) + ¯yˇg(-ξ,0) + yˇg(ξ,1) + ¯yˇg(ξ,0) \\
			& ≥ 2yˇg(0,1) + 2¯yˇg(0,0) = 2ˇg(0,y).
		\end{align*}
		
		This shows that the convexity on the boundary of the interpolation
		cells follows from the convexity of the data points.  For convexity
		within a cell it trivially holds because the value within a cell is
		defined through interpolation.  Hence the lemma is sound.
	\end{proof}

	We conclude that $\LI (M, ˇG)$, the tri-linear interpolant of the discrete
	envelop, can be used as a substitute of $˘g$, the continuous envelop.
	Together with \cref{thm:serial-novel}, we can now lower bound $Z(W^{ｐｓ})$
	with a concrete object $ˇG$ in place of the abstract object $˘g$.

	Bibliographical remark:  some of the arguments presented in this section
	share common elements with \cite{Witsenhausen74}.

	In the next section, we will demonstrate how to utilize this new lower bound
	in power iteration.

\section{Finite State Power Iteration}\label{sec:dfa}

	For this section, recall the lesson that finite state automata has some
	memory when digesting the input stream.  We develop a variant of power
	iteration that keeps track of whether a synthetic channel is obtained by
	serial or parallel combination.

\subsection{Finite state automata}

\pgfplotstableread[header=false]{
       0 1569627 3042360 4406575 5661353 6802312 7815989 8680041 9363685 9824363
   28284 1600168 3070720 4432753 5685332 6823907 7834838 8695610 9375259 9830928
   56568 1630640 3099033 4458878 5709250 6845433 7853611 8711094 9386735 9837380
   84852 1661067 3127299 4484961 5733125 6866913 7872327 8726509 9398123 9843722
  113135 1691453 3155521 4511003 5756955 6888346 7890986 8741854 9409423 9849950
  141418 1721775 3183696 4536991 5780724 6909711 7909569 8757115 9420624 9856057
  169700 1752082 3211833 4562952 5804471 6931050 7928115 8772320 9431749 9862055
  197980 1782330 3239922 4588858 5828157 6952320 7946584 8787439 9442773 9867936
  226260 1812530 3267970 4614723 5851801 6973543 7964996 8802488 9453709 9873697
  254538 1842686 3295975 4640546 5875404 6994719 7983350 8817466 9464554 9879337
  282814 1872787 3323933 4666315 5898944 7015824 8001628 8832357 9475299 9884850
  316831 1902902 3351853 4692062 5922464 7036907 8019867 8847193 9485964 9890259
  351713 1932964 3379725 4717754 5945921 7057920 8038028 8861941 9496527 9895557
  386507 1962972 3407555 4743403 5969336 7078883 8056131 8876616 9506999 9900734
  421126 1992928 3435342 4769008 5992709 7099797 8074176 8891219 9517378 9905790
  455306 2022833 3463081 4794558 6016018 7120641 8092143 8905734 9527655 9910719
  489375 2052692 3490782 4820082 6039306 7141458 8110069 8920190 9537850 9915533
  523038 2082502 3518436 4845550 6062531 7162204 8127917 8934558 9547941 9920224
  556461 2112272 3546047 4870978 6085713 7182901 8145706 8948852 9557938 9924791
  589694 2142003 3573616 4896365 6108853 7203550 8163435 8963072 9567840 9929232
  622586 2171685 3601137 4921697 6131930 7224127 8181085 8977203 9577639 9933542
  655638 2201330 3628624 4947002 6154988 7244677 8198695 8991274 9587351 9937734
  688424 2230926 3656062 4972250 6177982 7265155 8216226 9005254 9596958 9941798
  721171 2260482 3683457 4997458 6200932 7285583 8233695 9019159 9606469 9945732
  753874 2289998 3710808 5022626 6223838 7305961 8251102 9032988 9615883 9949537
  786307 2319466 3738110 5047738 6246679 7326268 8268430 9046727 9625192 9953207
  818778 2348903 3765376 5072822 6269499 7346545 8285714 9060404 9634411 9956751
  851016 2378291 3792592 5097850 6292253 7366751 8302919 9073989 9643523 9960160
  883143 2407636 3819768 5122838 6314964 7386905 8320061 9087497 9652536 9963432
  915182 2436938 3846902 5147786 6337632 7407009 8337141 9100927 9661450 9966565
  947018 2466194 3873987 5172676 6360234 7427040 8354140 9114265 9670256 9969561
  978886 2495425 3901035 5197547 6382815 7447043 8371094 9127539 9678970 9972440
 1010579 2524609 3928034 5222359 6405330 7466973 8387967 9140721 9687574 9975179
 1042192 2553746 3954990 5247129 6427801 7486851 8404777 9153824 9696077 9977780
 1073741 2582838 3981903 5271856 6450228 7506676 8421523 9166846 9704479 9980242
 1105137 2611883 4008767 5296523 6472589 7526427 8438188 9179776 9712769 9982560
 1136556 2640890 4035597 5321170 6494929 7546147 8454806 9192638 9720965 9984744
 1167840 2669851 4062377 5345757 6517201 7565794 8471342 9205407 9729050 9986821
 1199064 2698769 4089115 5370302 6539430 7585388 8487813 9218094 9737029 9988748
 1230240 2727644 4115810 5394806 6561614 7604928 8504219 9230701 9744903 9990527
 1261299 2756472 4142455 5419250 6583731 7624395 8520543 9243214 9752664 9992156
 1292482 2785259 4169074 5443674 6605828 7643829 8536818 9255655 9760329 9993634
 1323567 2813999 4195641 5468037 6627858 7663189 8553011 9268002 9767879 9994960
 1354556 2842696 4222163 5492360 6649842 7682495 8569138 9280266 9775324 9996134
 1385451 2871350 4248641 5516642 6671779 7701747 8585197 9292447 9782660 9997154
 1416252 2899957 4275067 5540864 6693648 7720924 8601173 9304532 9789882 9998020
 1447055 2928527 4301463 5565063 6715495 7740067 8617098 9316545 9797003 9998730
 1477773 2957050 4327807 5589202 6737273 7759135 8632940 9328461 9804009 9999284
 1508444 2985531 4354109 5613300 6759006 7778149 8648714 9340292 9810906 9999681
 1539073 3013969 4380368 5637357 6780693 7797107 8664420 9352037 9817694 9999920
}\tablenovel

\begin{figure}\centering
	\begin{tikzpicture}[domain=0:10,samples=500,blend mode=multiply]
		\draw[black,scale=.8]plot(\x,{\x*(20-\x)/10});
		\pgfplothandlerlineto
		\pgfplotstreamstart
		\def\x{0}
		\pgfplotstableforeachcolumn\tablenovel\as\col{
			\pgfplotstableforeachcolumnelement\col\of\tablenovel\as\y{
				\xdef\x{\the\numexpr\x+1}
				\pgfplotstreampoint{\pgfpoint{\x mm/5*0.8}{\y sp*1.49174}}
			}
		}
		\pgfplotstreampoint{\pgfpoint{8cm}{8cm}}
		\pgfplotstreamend
		\pgfsetstrokecolor{PMS3015}
		\pgfusepath{stroke}
		\draw[PMS1245,scale=.8]plot(\x,{\x*sqrt(200-\x^2)/10});
		\draw[black,scale=.8]plot(\x,{\x^2/10});
	\end{tikzpicture}
	\caption{
		From top-left to bottom-right: old upper bound of $2x - x²$, new lower
		bound of $ˇg (√x,√x,x)$, old lower bound of $x √{2-x²}$, and parallel
		combination's Bhattacharyya parameter $x²$.
	}\label{fig:novel}
\end{figure}
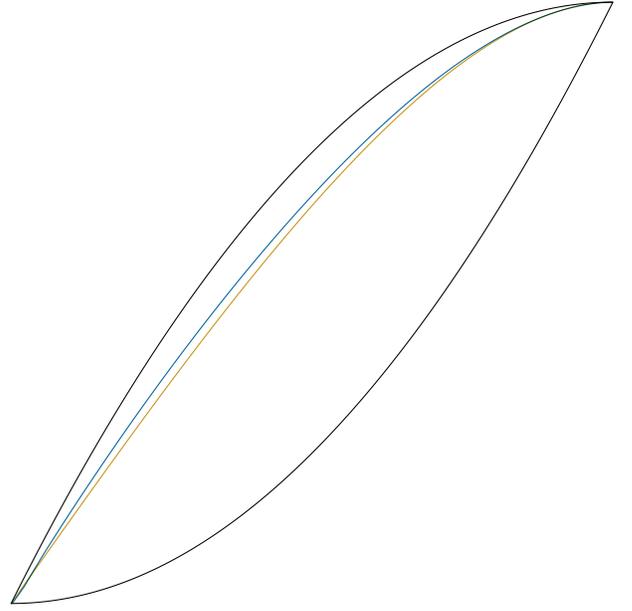

	To begin, suppose that there are two concave functions $φ_ｓ, φ_ｐ： [0,1] →
	ℝ$ that satisfy $φ_ｓ(0) = φ_ｓ(1) = φ_ｐ(0) = φ_ｐ(1) = 0$ but are positive
	elsewhere.  Define shorthands $ψ_ｓ, ψ_ｐ, ψ： \BMS → ℝ$ by
	\begin{align*}
		ψ_ｓ(W) & ≔ φ_ｓ(Z(W)), \\
		ψ_ｐ(W) & ≔ φ_ｐ(Z(W)), \\
		ψ(W)    & ≔ ψ_ｓ(W^ｓ) + ψ_ｐ(W^ｐ) \\
		        & ≔ φ_ｓ(Z(W^ｓ)) + φ_ｐ(Z(W^ｐ)).
	\end{align*}
	$ψ$ will be the counterpart of $h$ in our new bound.

	Here is the motivation of this indirect setup: in \cite{MHU16}, $h(Z(W))$ is
	a score that measures the extent of polarization---a smaller $h(Z(W))$ means
	that $W$ is more polarized.  Now we measure the extent of polarization of
	$W$ by first giving its children scores and sum them, except that we are
	biased.  As we will see later, $φ_ｓ(x)$ is greater than or equal to
	$φ_ｐ(x)$ for all $x$.  This means that, if $U^ｓ$ and $V^ｐ$ have the same
	Bhattacharyya parameter, we will give $V^ｐ$, a parallel combination, a
	lower score---because we think that $V^ｐ$ is more polarized.

	There is a reason to distinguish serial combination from parallel
	combination.  Comparing \cref{thm:serial-novel} with
	\cref{thm:serial-under}, we see that parallel combination assumes better
	bounds on Bhattacharyya parameters.  This implies that the domain of
	\cref{sup:h(y)} can be made smaller, which potentially makes the quotient
	corresponding to parallel combination smaller.

	Given the motivation, now we want a uniform upper bound on this ratio for
	all $W∈\BMS$:
	\begin{multline*}
		÷ {ψ(W^ｓ) + ψ(W^ｐ)} {2ψ(W)} \\
		= ÷ {ψ_ｓ(W^{ｓｓ}) + ψ_ｐ(W^{ｓｐ}) + ψ_ｓ(W^{ｐｓ}) + ψ_ｐ(W^{ｐｐ})}
		    {2ψ_ｓ(W^ｓ) + 2ψ_ｐ(W^ｐ)}.
	\end{multline*}
	Hence it suffices to bound
	\[÷ {ψ_ｓ(W^{ｓｓ}) + ψ_ｐ(W^{ｓｐ})} {2ψ_ｓ(W^ｓ)} ～†and†～
	  ÷ {ψ_ｓ(W^{ｐｓ}) + ψ_ｐ(W^{ｐｐ})} {2ψ_ｐ(W^ｐ)}\]
	from above.  One can now see the automata: channels that are serial
	combinations are always scored by $ψ_ｓ$, and channels that are parallel
	combinations are always scored by $ψ_ｐ$.  The subscript of $ψ$ indicates
	the current state of the automata; it remembers how the concerned channel
	was synthesized.

	We simplify the supremum of the first quotient as below:
	\begin{align*}
		～～&！！
		\sup_{W∈\BMS*} ÷ {ψ_ｓ(W^{ｓｓ}) + ψ_ｐ(W^{ｓｐ})} {2ψ_ｓ(W^ｓ)} \\
		& = \sup_{U=W^ｓ} ÷ {φ_ｓ(Z(U^ｓ)) + φ_ｐ(Z(U^ｐ))} {2φ_ｓ(Z(U))} \\
		& ≤ \sup_{U∈\BMS*} ÷ {φ_ｓ(Z(U^ｓ)) + φ_ｐ(Z(U^ｐ))} {2φ_ｓ(Z(U))} \\
		& = \sup_{0<x<1\vphantom f}\;
		    \sup_{f(x,x)≤y≤2x-x²} ÷ {φ_ｓ(x²)+φ_ｐ(y)} {2φ_ｓ(x)}.
	\end{align*}
	Here, the second supremum is taken over those $U$ that are themselves serial
	combinations.  We then treat $U$ as an usual BMS channel and apply the
	classic lower bound (\cref{thm:serial-under}). Because of that, $y$ ranges
	over $[f(x,x), 2x-x²]$.

	Similarly but not identically, the other quotient with $ψ_ｐ$ in the
	denominator can be simplified as below:
	\begin{align*}
		～～&！！
		\sup_{W∈\BMS*} ÷ {ψ_ｓ(W^{ｐｓ}) + ψ_ｐ(W^{ｐｐ})} {2ψ_ｐ(W^ｐ)} \\
		& =\sup_{V=W^ｐ} ÷ {φ_ｐ(Z(V^ｓ)) + φ_ｐ(Z(V^ｐ))} {2φ_ｐZ(V)} \\
		& ≤\sup_{0<x<1\vphantom{√x}}\;
		   \sup_{˘g(√x,√x,x)≤y≤2x-x²} ÷ {φ_ｐ(x²)+φ_ｓ(z)} {2φ_ｐ(x)}.
	\end{align*}
	Here, the second supremum is taken over those $V$ that are themselves
	parallel combinations.  We invoke \cref{thm:serial-novel} and let $z$ range
	over $[˘g (√x,√x,x), 2x-x²]$.  The new supremum is taken over a strictly
	smaller region than in the previous work---see \cref{fig:novel}---so a
	smaller supremum is expected.
	
\subsection{Power Iteration}

\pgfplotstableread[header=false]{
       0 3798380 6040041 7719288 8857789 9697779 9967306 9520300 8165274 5577032
  194759 3848966 6076949 7748010 8877924 9709158 9962617 9504274 8127634 5507259
  334017 3899412 6112684 7776597 8898163 9721123 9957734 9488315 8089169 5436834
  459349 3949889 6148300 7805079 8918313 9732815 9952416 9471305 8048975 5365523
  571410 4000085 6184028 7833583 8938552 9744999 9947244 9453184 8008015 5294851
  681069 4049697 6219993 7862010 8958884 9756506 9942203 9435243 7967408 5223246
  784124 4099682 6256496 7890227 8979202 9768163 9937475 9416494 7926414 5150533
  880187 4150385 6293308 7917823 8999592 9779580 9932830 9397248 7885056 5076813
  973417 4200521 6329992 7944816 9019654 9791378 9928697 9377980 7844248 5002498
 1066792 4250279 6366481 7970799 9039323 9803243 9924807 9357937 7802682 4925868
 1158178 4299793 6402871 7996473 9058988 9815136 9920546 9338062 7760657 4849139
 1245869 4348799 6438622 8021825 9078116 9826725 9916424 9317787 7719120 4772496
 1330834 4398190 6474721 8046081 9097469 9838515 9911615 9297471 7677878 4694685
 1412419 4447684 6510080 8070549 9116717 9849488 9906546 9276832 7635635 4614346
 1491005 4496658 6545187 8094768 9136063 9860486 9901244 9255817 7592040 4532480
 1568920 4546160 6580374 8118289 9155707 9870767 9895864 9234163 7547651 4449254
 1646505 4595976 6615679 8142011 9175192 9881078 9890267 9211813 7502561 4364329
 1724570 4643611 6651327 8165749 9193392 9890295 9884319 9189487 7457360 4276133
 1801535 4690514 6686520 8189822 9210290 9899279 9878390 9166678 7411837 4186025
 1877250 4737198 6721392 8213304 9226410 9908151 9871612 9143141 7365841 4094871
 1953227 4784516 6756034 8235659 9242104 9916430 9864570 9119113 7319229 4004560
 2026649 4831721 6791080 8257920 9258632 9923067 9857378 9094654 7270935 3913213
 2097425 4879573 6826144 8279205 9274174 9930184 9850273 9070473 7222009 3820792
 2168013 4927423 6861495 8300285 9289449 9936908 9841672 9045958 7172986 3727051
 2236504 4974880 6895645 8320856 9305086 9943400 9832457 9020622 7124624 3632043
 2303876 5022129 6928178 8341698 9321033 9949365 9823211 8994193 7076326 3534587
 2369278 5069068 6960337 8362636 9336008 9955282 9813446 8967108 7027208 3436184
 2433445 5115187 6992985 8383071 9351182 9961285 9803113 8938526 6976998 3337106
 2497348 5159935 7024971 8403281 9366643 9966962 9793598 8909175 6924775 3236223
 2560939 5204628 7057636 8423630 9382571 9972565 9784051 8878815 6872622 3133430
 2624939 5248071 7089802 8444286 9398924 9977995 9773602 8847772 6819590 3029385
 2688184 5291324 7122221 8465154 9415911 9982541 9762738 8817813 6764625 2923525
 2751221 5334425 7154470 8486021 9432773 9985891 9751628 8786062 6709031 2813861
 2814638 5377197 7187519 8506058 9449758 9989216 9740801 8753337 6652616 2700528
 2878415 5419988 7220943 8525998 9466147 9991842 9729974 8720015 6595427 2582144
 2940206 5462477 7254299 8546063 9482291 9993714 9719248 8686739 6536879 2463561
 3001489 5504775 7287264 8566354 9498230 9995409 9707887 8653320 6477181 2343524
 3062600 5545920 7320549 8586803 9514339 9996380 9696207 8619897 6416443 2221042
 3124563 5586399 7353371 8607385 9529741 9997668 9684493 8586184 6354284 2095072
 3185945 5625703 7386101 8628179 9545220 9998883 9672245 8552461 6291379 1966724
 3246579 5665053 7418392 8649314 9560156 9999880 9659725 8518628 6228634 1835328
 3305000 5703965 7451042 8670468 9575118 9999655 9646942 8484772 6166234 1699009
 3362237 5743141 7482046 8691521 9590482 9998584 9634344 8450918 6103127 1553675
 3418855 5782597 7512579 8712755 9604861 9996222 9621331 8417218 6039831 1405605
 3475425 5820519 7542907 8733947 9618480 9993291 9608252 8382748 5976218 1252081
 3531714 5857578 7573498 8755178 9631992 9989320 9594249 8348181 5911831 1092586
 3586641 5893977 7603212 8776189 9645433 9985572 9580504 8312186 5845668  921204
 3640335 5930448 7631878 8796709 9658698 9982144 9565697 8275660 5779577  739409
 3693315 5967143 7661052 8817147 9672912 9977524 9550727 8239188 5713624  541565
 3746778 6003466 7690216 8837724 9685975 9972427 9535583 8202434 5646201  316809
}\tableserial

\pgfplotstableread[header=false]{
       0 3798380 6040041 7719288 8857789 9635261 9838295 9392768 8076108 5550531
  194759 3848966 6076949 7748010 8877924 9646000 9834317 9376842 8039489 5481301
  334017 3899412 6112684 7776597 8898163 9657144 9830369 9360558 8002802 5411657
  459349 3949889 6148300 7805079 8918313 9667704 9825705 9343706 7964842 5341855
  571410 4000085 6184028 7833583 8938552 9678722 9821240 9325197 7925678 5272472
  681069 4049697 6219993 7862010 8958884 9689026 9816632 9306810 7886539 5202067
  784124 4099682 6256496 7890227 8979202 9699535 9811399 9287892 7847023 5130373
  880187 4150385 6293308 7917823 8999159 9709389 9806594 9268947 7806819 5057635
  973417 4200521 6329992 7944816 9018168 9719995 9802043 9250201 7767079 4984212
 1066792 4250279 6366481 7970799 9035904 9730751 9797631 9231685 7726998 4908763
 1158178 4299793 6402871 7996473 9052921 9741040 9793018 9213396 7686393 4833015
 1245869 4348799 6438622 8021825 9069517 9751270 9788380 9194621 7645798 4757406
 1330834 4398190 6474721 8046081 9086364 9761221 9783450 9175901 7605266 4680426
 1412419 4447684 6510080 8070549 9102236 9770189 9777689 9156405 7564142 4601106
 1491005 4496658 6545187 8094768 9118141 9779172 9771326 9136660 7521931 4519907
 1568920 4546160 6580374 8118289 9134523 9787721 9764939 9115863 7479357 4437489
 1646505 4595976 6615679 8142011 9150878 9795909 9758708 9093909 7435817 4353026
 1724570 4643611 6651327 8165749 9166216 9803097 9752023 9072253 7391779 4265564
 1801535 4690514 6686520 8189822 9181395 9810095 9745674 9050138 7347358 4176442
 1877250 4737198 6721392 8213304 9196590 9817089 9739591 9027957 7302760 4086025
 1953227 4784516 6756034 8235659 9212030 9823857 9733334 9005222 7257277 3996467
 2026649 4831721 6791080 8257920 9227921 9829696 9726943 8981341 7210000 3905791
 2097425 4879573 6826144 8279205 9244031 9835784 9720504 8957692 7162649 3814026
 2168013 4927423 6861495 8300285 9260038 9841766 9713322 8933260 7115325 3720953
 2236504 4974880 6895645 8320856 9276347 9847147 9705102 8908520 7068074 3626605
 2303876 5022129 6928178 8341698 9292247 9852192 9696182 8883301 7020522 3529653
 2369278 5069068 6960337 8362636 9307435 9857321 9686443 8856897 6972161 3431699
 2433445 5115187 6992985 8383071 9322601 9862161 9676238 8829946 6923037 3333162
 2497348 5159935 7024971 8403281 9337947 9866563 9666245 8802194 6871881 3232926
 2560939 5204628 7057636 8423630 9353120 9870651 9655985 8773121 6820101 3130670
 2624939 5248071 7089802 8444286 9368650 9874739 9645222 8743556 6767854 3027201
 2688184 5291324 7122221 8465154 9383850 9878204 9634347 8714197 6713776 2921747
 2751221 5334425 7154470 8486021 9398970 9880857 9623590 8683189 6659747 2812492
 2814638 5377197 7187519 8506058 9414186 9882969 9613105 8651455 6605457 2699507
 2878415 5419988 7220943 8525998 9428903 9883980 9601971 8619070 6549675 2581489
 2940206 5462477 7254299 8546063 9442648 9884381 9590896 8586333 6492684 2463271
 3001489 5504775 7287264 8566354 9456217 9883980 9579238 8553589 6433992 2343524
 3062600 5545920 7320549 8586803 9469786 9882807 9567685 8521249 6374779 2221042
 3124563 5586399 7353371 8607385 9482556 9881158 9556703 8489323 6313820 2095072
 3185945 5625703 7386101 8628179 9495914 9879705 9545732 8457194 6251830 1966724
 3246579 5665053 7418392 8649314 9509268 9878758 9534379 8424494 6190334 1835328
 3305000 5703965 7451042 8670468 9522259 9876322 9522674 8390994 6129463 1699009
 3362237 5743141 7482046 8691521 9535228 9872930 9510096 8357605 6067650 1553675
 3418855 5782597 7512579 8712755 9548063 9868840 9496614 8324147 6005477 1405605
 3475425 5820519 7542907 8733947 9560807 9864761 9483062 8290474 5943196 1252081
 3531714 5857578 7573498 8755178 9573692 9860202 9468630 8256805 5879832 1092586
 3586641 5893977 7603212 8776189 9586628 9855977 9453940 8221621 5815106  921204
 3640335 5930448 7631878 8796709 9599330 9851586 9439108 8185621 5750301  739409
 3693315 5967143 7661052 8817147 9612268 9847034 9423780 8149128 5685107  541565
 3746778 6003466 7690216 8837724 9624281 9842458 9408270 8112695 5618609  316809
}\tableparall

\begin{figure}\centering
	\begin{tikzpicture}[blend mode=multiply]
		\pgfplothandlerlineto
		\pgfplotstreamstart
		\def\x{0}
		\pgfplotstableforeachcolumn\tableserial\as\col{
			\pgfplotstableforeachcolumnelement\col\of\tableserial\as\y{
				\xdef\x{\the\numexpr\x+1}
				\pgfplotstreampoint{\pgfpoint{\x mm/5*0.8}{\y sp*2}}
			}
		}
		\pgfplotstreampoint{\pgfpoint{8cm}{0cm}}
		\pgfplotstreamend
		\pgfsetstrokecolor{PMS3015}
		\pgfusepath{stroke}
		\pgfplothandlerlineto
		\pgfplotstreamstart
		\def\x{0}
		\pgfplotstableforeachcolumn\tableparall\as\col{
			\pgfplotstableforeachcolumnelement\col\of\tableparall\as\y{
				\xdef\x{\the\numexpr\x+1}
				\pgfplotstreampoint{\pgfpoint{\x mm/5*0.8}{\y sp*2}}
			}
		}
		\pgfplotstreampoint{\pgfpoint{8cm}{0cm}}
		\pgfplotstreamend
		\pgfsetstrokecolor{PMS1245}
		\pgfusepath{stroke}
	\end{tikzpicture}	
	\caption{
		Eigenfunction pair $ˆ{φ}_ｓ$ (blue) and $ˆ{φ}_ｐ$ (brown).  The former is
		greater for $x > 0.4$---this is the place where $z(x, ˆ{Φ}_ｓ) > y(x,
		ˆ{Φ}_ｓ)$ due to $ˇg (√x,√x,x) > f(x,x)$.
	}\label{fig:eigen}
\end{figure}

	It remains to use linear interpolation to represent $φ_ｓ$ and $φ_ｐ$, and
	apply power iteration to minimize the eigenvalues.

	Let $L$ be \cref{for:cheby}; say $ℓ = 10⁶$.  Let $Φ^ｓ, Φ^ｐ ∈ ℝ^{ℓ+1}$ be
	arrays parametrized by $L$.  We execute this program:
	\[\left|～\begin{tabular}{l}
		For all $x∈L$: \\
		～～ $Φ_ｓ[x] ← x^{0.78} (1-x)^{0.78} (2x²+3)$; \\
		～～ $Φ_ｐ[x] ← x^{0.78} (1-x)^{0.78} (2x²+3)$; \\
		Loop until $Φ_ｓ$ and $Φ_ｐ$ converge: \\
		～～ $φ_ｓ ← \LI(L,Φ_ｓ)$; \\
		～～ $φ_ｐ ← \LI(L,Φ_ｐ)$; \\
		～～ For all $x∈L$: \\
		～～～～ $Φ_ｓ'[x] ← \dfrac {φ_ｐ(x²)+φ_ｓ(y(Φ_ｓ,x))} {2φ_ｓ(x)\maxΦ_ｓ}$; \\
		～～～～ $Φ_ｐ'[x] ← \dfrac {φ_ｐ(x²)+φ_ｓ(z(Φ_ｓ,x))} {2φ_ｐ(x)\maxΦ_ｓ}$; \\
		～～ $Φ_ｓ ← Φ_ｓ'$; \\
		～～ $Φ_ｐ ← Φ_ｐ'$; \\
	\end{tabular}\right.\]
	Here,
	\begin{itemize}
		\item $Φ_ｓ'$ and $Φ_ｐ'$ are temporary memory spaces
		      that store the updated content for the next round.
		\item $y(Φ_ｓ,x)$ and $z(Φ_ｓ,x)$ are meant to be the arguments 
		      that maximize $φ_ｓ(y)$ and $φ_ｓ(z)$ over the ranges
		      $f(x,x) ≤ y ≤ 2x-x²$ and $ˇg (√x,√x,x) ≤ z ≤ 2x-x²$, respectively.
		\item $ˇg$ is $\LI (M, ˇG)$, which is $≈ ˘g$.
		      If a rigorous lower bound of $˘g$ is desired, see \cref{app:mpfi}.
	\end{itemize}
	We can reuse the implementation of $y(H,x)$ in \cref{for:arg-may};  and
	implement $z(H,x)$ as
	\[z(H,x) ≔ \begin{cases*}
		ˇg (√x,√x,x) & if $ˇg (√x,√x,x) ≥ \argmax H$, \\
		2x-x²        & if $2x-x² ≤ \argmax H$, \\
		\argmax H    & otherwise.
	\end{cases*}\]

	Empirically, $Φ_ｓ$ and $Φ_ｐ$ converge.  Let $ˆ{Φ}_ｓ$ and $ˆ{Φ}_ｐ$ be the
	end results of power iteration.  We can now use
	\begin{align*}
		ˆ{φ}_ｓ & ≔ \LI(L, ˆ{Φ}_ｓ) \\
		ˆ{φ}_ｐ & ≔ \LI(L, ˆ{Φ}_ｐ)
	\end{align*}
	as the scoring functions.  See \cref{fig:eigen} for their plots; notice that
	$ˆ{φ}_ｓ ≥ ˆ{φ}_ｐ$.

\section{New Proof of \texorpdfstring{$μ≤4.63$}{μ ≤ 4.63}}\label{sec:wrap}

	This section gathers the materials and proves the main theorem.

	\begin{thm}[Main theorem]
		$μ ≤ 4.63$, where $μ$ is the scaling exponent of polar coding using
		Arıkan's kernel $[¹₁{}⁰₁]$ over BMS channels.
	\end{thm}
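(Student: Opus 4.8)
The plan is to feed the eigenfunction pair $\hat\varphi_\seri,\hat\varphi_\para$ produced by the finite-state power iteration of \cref{sec:dfa} into the accumulation argument, and then to certify the resulting eigenvalue rigorously by interval arithmetic on a fine lattice.

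First I would run the two-state analogue of the argument behind \cref{sup:h(Z)}: writing $\psi_\seri(W)=\hat\varphi_\seri(Z(W))$ and $\psi_\para(W)=\hat\varphi_\para(Z(W))$, it suffices to produce a single $\lambda<1$ with
\[
\frac{\psi_\seri(W^{\seri\seri})+\psi_\para(W^{\seri\para})}{2\psi_\seri(W^\seri)}\le\lambda
\qquad\text{and}\qquad
\frac{\psi_\seri(W^{\para\seri})+\psi_\para(W^{\para\para})}{2\psi_\para(W^\para)}\le\lambda
\]
for every $W\in\BMS*$; the eigenvalue-to-scaling-exponent translation of \cite{MHU16} then gives $\mu\le(-\log_2\lambda)^{-1}$. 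Using $Z(W^\para)=Z(W)^2$ (\cref{thm:parall-goto}), the classical serial lower bound (\cref{thm:serial-under}), and its tri-variate counterpart (\cref{thm:serial-novel}) with the continuous envelope $\breve g$ replaced by the tri-linear interpolant $\check g=\LI(M,\check G)$ of the discrete envelope (tri-convex by \cref{lem:tri-convex}) --- or, where rigor demands it, by the verified lower bound on $\breve g$ of \cref{app:mpfi} --- these two ratios collapse to the planar suprema
\[
\sup_{0<x<1}\;\sup_{f(x,x)\le y\le 2x-x^2}\frac{\hat\varphi_\para(x^2)+\hat\varphi_\seri(y)}{2\hat\varphi_\seri(x)}
\qquad\text{and}\qquad
\sup_{0<x<1}\;\sup_{\check g(\sqrt x,\sqrt x,x)\le z\le 2x-x^2}\frac{\hat\varphi_\para(x^2)+\hat\varphi_\seri(z)}{2\hat\varphi_\para(x)},
\]
where in each the inner supremum is realized at one of the three explicit candidates of the unimodality shortcut \cref{for:arg-may}.

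Next I would certify both planar suprema. On the bulk $[\delta,1-\delta]$ the denominators $2\hat\varphi_\seri(x)$ and $2\hat\varphi_\para(x)$ are bounded away from $0$, so evaluating the two resulting univariate ratios at the Chebyshev lattice $L$ of \cref{for:cheby} with an interval-arithmetic library, and adding an explicit Lipschitz slack to cover the gaps between consecutive lattice points, produces a certified upper bound; here the verified lower bound on $\breve g$ from \cref{app:mpfi} is what makes substituting $\check g$ for $\breve g$ legitimate in \cref{thm:serial-novel}. On the tiny end neighborhoods $[0,\delta]$ and $[1-\delta,1]$ I would apply the surgery of \cref{sec:old}, replacing $\hat\varphi_\seri$ and $\hat\varphi_\para$ there by $x^{0.78}$ and $(1-x)^{0.78}$; both ratios are then bounded by an explicit constant strictly below the target, and since only tiny neighborhoods of $\{0,1\}$ are altered the global supremum is essentially unchanged. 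Setting $\lambda$ to the certified maximum of the two contributions and checking $(-\log_2\lambda)^{-1}\le 4.63$ completes the proof.

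The main obstacle is not a new idea but the rigor of this numerical chain; three points must be discharged with interval arithmetic and explicit monotonicity or Lipschitz estimates. One: the interpolated discrete envelope, sharpened via \cref{app:mpfi}, must lie pointwise below the true continuous envelope $\breve g$, for otherwise \cref{thm:serial-novel} no longer licenses the lower bound on $Z(W^{\para\seri})$ and the shrunken range for $z$ would be illegal. Two: the piecewise-linear, surgically-corrected $\hat\varphi_\seri,\hat\varphi_\para$ must be honest concave functions vanishing only at $0$ and $1$, so that the accumulation argument applies to them and not merely to the numerical fixed point that the iteration happened to converge to. Three: the lattice evaluation plus Lipschitz slack must genuinely dominate the continuous supremum, which forces the resolution $\ell=10^6$ to be fine enough that the residual sampling error stays below the gap between the computed value and the claimed $4.63$. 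Stitching these certifications together into one airtight inequality is the delicate part.
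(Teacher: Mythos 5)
Your proposal follows the same core strategy as the paper: the finite-state power iteration of \cref{sec:dfa} to produce the eigenfunction pair $\hat{\varphi}_\seri,\hat{\varphi}_\para$, the two-quotient formulation with $\lambda_\seri$ and $\lambda_\para$, the collapse to planar suprema via \cref{thm:parall-goto,thm:serial-under,thm:serial-novel}, the replacement of the abstract envelope $\breve g$ by the verified interpolant $\check g_\searrow$ from \cref{app:mpfi}, and finally the translation $\mu\le(-\log_2\lambda)^{-1}$ via the machinery of \cite{MHU16}. That is exactly the skeleton of the paper's \cref{sec:wrap}.

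The difference is in what you do at the finish line. You propose to close the argument rigorously: surgery near $\{0,1\}$ (replacing $\hat{\varphi}_\seri,\hat{\varphi}_\para$ by $x^{0.78}$ and $(1-x)^{0.78}$ on tiny end intervals), an explicit Lipschitz slack to dominate the continuous supremum from the discrete lattice evaluation, and a verification that the surgically corrected scoring functions are genuine concave functions vanishing only at the endpoints. The paper explicitly declines to do this: at the end of \cref{sec:old} it states that, for the new bound, the surgery step is skipped and the maximum over the discrete lattice \cref{for:log2(H)} is taken directly, and the proof of the main theorem itself reports $\max(\lambda_\seri,\lambda_\para)\approx 0.860715$ and $\mu\lesssim 4.62125$ as ``empirical'' upper bounds, rounding up to the safer $4.63$. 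Of the three rigor obligations you list, the paper fully discharges only the first (via $G_\searrow$, $\check G_\searrow$, and \cref{thm:better}); the second and third are handled informally through the safety margin between $4.62$ and $4.63$. So your proof would be strictly more rigorous than the one actually given, at the cost of implementing the surgery, bounding Lipschitz constants of the quotients, and carrying interval arithmetic through the one-dimensional suprema; the paper buys brevity by leaning on the margin. No new idea is missing from your proposal, and no step of yours would fail --- it is the same argument, carried further.

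One small item you glossed over: the unimodality shortcut \cref{for:arg-may} that you invoke to locate the inner supremum in $z$ assumes $\hat{\varphi}_\seri$ is unimodal, which the paper also assumes without verifying. If you want the fully airtight version you are aiming for, unimodality of the converged scoring functions is a fourth certification to add to your list, or you should replace the shortcut by a direct (interval) maximization over $z$ on each range.
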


	\begin{proof}
		We have seen that  $\LI (M, ˇG) ≈ ˘g$ and $˘g (√x,√x,x) ≤ Z(W^ｓ)$,
		where $W$ is a parallel combination of another BMS channel and $x =
		Z(W)$.  To obtain a practical yet rigorous lower bound on $Z(W^ｓ)$,
		see \cref{app:mpfi} for how to define $G_↘$ and $ˇG_↘$.  By
		\cref{thm:better} therein, we have $ˇg_↘ (√x,√x,x) ≤ Z(W^ｓ)$ where
		$ˇg_↘ ≔ \LI (M, ˇG_↘)$.

		Next, apply power iteration to optimize for the eigenvalues
		\begin{align*}
			λ_ｓ & ≔\sup_{x∈L∖\{0,1\}\vphantom f}\;
			        \sup_{f(x,x)≤z≤2x-x²}
			        ÷ {ˆ{φ}_ｓ(x²) + ˆ{φ}_ｐ(z)} {2ˆ{φ}_ｓ(x)}, \\
			λ_ｐ & ≔\sup_{x∈L∖\{0,1\}\vphantom{√x}}\;
			        \sup_{ˇg_↘(√x,√x,x)≤z≤2x-x²}
			        ÷ {ˆ{φ}_ｐ(x²) + ˆ{φ}_ｓ(z)} {2ˆ{φ}_ｐ(x)}.
		\end{align*}
		Per our execution, both suprema are about $0.860714$.

		Finally, we conclude that
		\[\sup_{W∈\BMS*} ÷{ψ(W^ｓ) + ψ(W^ｐ)} {2ψ(W)}\]
		has $\max(λ_ｓ, λ_ｐ) ≈ 0.860715$ as an empirical upper bound.  And $μ$
		has $㏒₂ (\max(λ_ｓ, λ_ｐ)) ^ {-1} ≈ 4.62125$ as an empirical upper bound.
		Hence it is safe to say $μ ≤ 4.63$.
	\end{proof}

\section{Conclusions}

	In this paper, we argue that the scaling exponent is an essential constant
	characterizing the scaling behavior of polar coding, of which very little is
	known.  We then lower the overestimate of the scaling exponent from $4.714$
	to $4.63$.
	
	The limit of this method---analyzing $(U Ｐ V) Ｓ W$ to gain better control
	on $Z$---is $4.61126$.  This number is obtained by assuming $g$ tri-convex
	and using $g (√x,√x,x) = x (1 + √{5-4x²}) / 2$ as the lower bound on the
	$Z(W^{ｐｓ})$ in terms of $x = Z(W^ｐ)$.  Futhermore, we expect that
	analyzing $(U Ｐ V) Ｓ (W Ｐ X)$ leads to a better bound.

\appendices
\crefalias{section}{appendix}

\section{Linear Interpolation Made a Proper Lower Bound}\label{app:mpfi}

\pgfmathdeclarefunction{example_f}1{%
	\pgfmathsetmacro\t{3.1416/1.618*(#1)}%
	\pgfmathparse{(\t-sin(\t r))/4}%
}

\begin{figure}\centering
	\begin{tikzpicture}[xscale=4/3]
		\draw[PMS3015]plot[domain=0:6,samples=200](\x,{example_f(\x)});
		\draw[PMS1245](0,0)foreach\x in{1,...,6}{--(\x,{example_f(\x)})};
	\end{tikzpicture}
	\caption{
		Piecewise linear interpolation (brown) of an arbitrary function (blue)
		is an approximation but not a valid lower bound.
	}\label{fig:linear}
\end{figure}
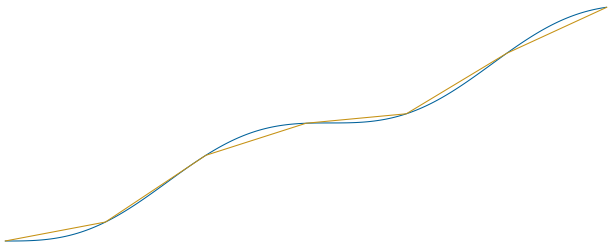

	There is a caveat when approximating $˘g$ using $ˇG$: the mesh is coarse.
	For one-dimensional interpolation (i.e., $H$ and $Φ_ｓ$ and $Φ_ｐ$), we can
	afford arrays of size $10⁶$ and the error is negligible as we only cares
	about the first three digits of the scaling exponent.  Unlike the
	one-dimensional case, for a three-dimensional mesh, the cube of $200$ is
	already $8·10⁶$ but the error is of the order of $1/200$.  See
	\cref{fig:linear} for an illustration of the caveat.
	
	In this appendix, we will demonstrate how to find an array $G_↘$ such that
	$\LI (M, G_↘) ≤ g$ pointwise.  With $G_↘$, we can run the iterative
	algorithm $\TC$ and the resulting array $ˇG_↘$ will satisfy $\LI (M, ˇG_↘)
	≤ ˘g$ pointwise.  This will give us a mathematically rigorous control on
	$Z(W^{ｐｓ})$.

\subsection{Monotonic increasing approach}

\begin{figure}\centering
	\begin{tikzpicture}[xscale=4/3]
		\draw[dotted]foreach\x in{1,...,6}{
			(\x,{example_f(\x-1)})--(\x,{example_f(\x)})
		};
		\draw[PMS3015]plot[domain=0:6,samples=200](\x,{example_f(\x)});
		\draw[PMS1245](0,0)foreach\x in{1,...,6}{--(\x,{example_f(\x-1)})};
		\foreach\x in{2,...,6}{\draw[->](\x-1,{example_f(\x-1)})--+(1,0);}
	\end{tikzpicture}
	\caption{
		If the target function is monotonically increasing, the evaluation at an
		interval's left end is a lower bound over the interval.  Thus, shifting
		the interpolant $δ$ units right makes it a lower bound, where $δ$ is
		the width of the intervals.
	}\label{fig:increase}
\end{figure}
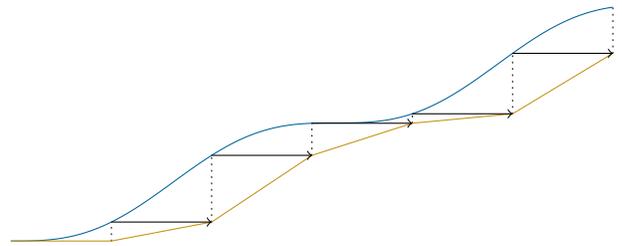

	Observe that $g(x,y,z)$ is a monotonic increasing function in $x$, $y$, and
	$z$.  This is a consequence of $x$, $y$, $z$, and $g$ being the
	Bhattacharyya parameters of certain BSCs.  In particular, we know $g(a,b,c)
	≤ g(x,y,z)$ for all $(x,y,z) ∈ (a,b,c) + [0,1/n]³$.  Here, the right-hand
	side is the mesh cell whose lower-left-near corner is $(a,b,c)$ and
	upper-right-far corner is $(a+1/n, b+1/n, c+1/n)$.

	Inspired by the observation, we declare a new array $G_→ ∈ ℝ ^ {(n+1) ×
	(n+1) × (n+1)}$ that is parametrized by $M$ and populated by
	\[G_→ [a,b,c] ← g \Bigl( a-÷1n∨0, b-÷1n∨0, c-÷1n∨0 \Bigr).\]
	Here, $a-1/n∨0$ means $\max(a - 1/n, 0)$.  We call this the \emph{monotonic
	increasing approach} and illustrate it in \cref{fig:increase}.  The
	following lemma shows that linearly interpolating this array serves as a
	lower bound.

	\begin{lem}
		$\LI (M, G_→) ≤ g$ pointwise.
	\end{lem}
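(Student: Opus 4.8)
The plan is to reduce the claimed pointwise inequality to two elementary observations: that $g$ is monotone increasing in each of its three arguments (already noted above), and that a tri-linear interpolant, evaluated at a point lying in a mesh cell, is a convex combination of the eight values of the array at the corners of that cell. First I would fix an arbitrary point $(x,y,z) ∈ [0,1]³$ and pin down the cell of $M$ that contains it: there is a node $(a,b,c) ∈ M$ with $a ≤ x ≤ a + ÷1n$, $b ≤ y ≤ b + ÷1n$, and $c ≤ z ≤ c + ÷1n$ (explicitly $a = ÷{⌊nx⌋}n$, with the convention $a = ÷{n-1}n$ at $x = 1$, and similarly for $b$ and $c$). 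The corners of this cell are the nodes $\bigl( a + ÷in, b + ÷jn, c + ÷kn \bigr)$ with $i,j,k ∈ \{0,1\}$, and by the very definition of $G_→$,
\[G_→ \Bigl[ a + ÷in, b + ÷jn, c + ÷kn \Bigr]
  = g \Bigl( \bigl( a + ÷{i-1}n \bigr) ∨ 0, \ \bigl( b + ÷{j-1}n \bigr) ∨ 0, \ \bigl( c + ÷{k-1}n \bigr) ∨ 0 \Bigr).\]

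The key step is then to bound each of these eight corner values by $g(x,y,z)$. Since $i,j,k ∈ \{0,1\}$, the shifts are nonpositive, so $\bigl( a + ÷{i-1}n \bigr) ∨ 0 ≤ a ≤ x$, and likewise $\bigl( b + ÷{j-1}n \bigr) ∨ 0 ≤ b ≤ y$ and $\bigl( c + ÷{k-1}n \bigr) ∨ 0 ≤ c ≤ z$. Feeding these into the increasing function $g$ yields
\[G_→ \Bigl[ a + ÷in, b + ÷jn, c + ÷kn \Bigr] ≤ g(x,y,z) \qquad \text{for all } i,j,k ∈ \{0,1\}.\]

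To conclude, I would use that $\LI(M, G_→)(x,y,z)$ equals a convex combination of precisely these eight corner values — the weights being products of one-dimensional interpolation weights, hence nonnegative and summing to one — so it too is at most $g(x,y,z)$; since $(x,y,z)$ was arbitrary, $\LI(M, G_→) ≤ g$ pointwise. I do not foresee a real obstacle here. The only mild subtleties are the clamping by $0$ along the faces $x = 0$, $y = 0$, $z = 0$ (harmless: replacing a negative shifted coordinate by $0$ only lowers the argument handed to the increasing $g$) and the degenerate cells along $x = 1$, $y = 1$, $z = 1$ (dealt with verbatim by the same corner-by-corner estimate). One could also avoid choosing a cell explicitly by observing that $\LI(M, G_→)$ is, on each cell, the multi-affine function matching $G_→$ at the corners, while $G_→ ≤ g$ holds at every node of $M$ by monotonicity; but the cell computation above is the most direct route.
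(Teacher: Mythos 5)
Your argument is correct and matches the paper's proof: both fix a mesh cell, observe that the tri-linear interpolant at an interior point is a convex combination of the eight corner values of $G_→$, and bound each corner value by $g(x,y,z)$ via the monotonicity of $g$ (the paper routes this through the intermediate bound $g(a,b,c) ≤ g(x,y,z)$, while you fold the two steps into one, but the substance is identical). No gaps.
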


	\begin{proof}
		It suffices to check the inequality cell-by-cell.  Fix an $(a,b,c) ∈ M$;
		we shall prove the inequality on the cell $(a,b,c) + [0, 1/n]³$.  Now
		for any $(x,y,z)$ in this cell, $\LI (M, G_→) (x,y,z)$ is a convex
		combination of these eight numbers
		\[令÷\tfrac\begin{matrix}
			～～ G_→ [a+÷0n, b+÷1n, c+÷1n] ～ G_→ [a+÷1n, b+÷1n, c+÷1n], \\
			    G_→ [a+÷0n, b+÷0n, c+÷1n], ～ G_→ [a+÷1n, b+÷0n, c+÷1n], ～～ \\[1ex]
			～～ G_→ [a+÷0n, b+÷1n, c+÷0n], ～ G_→ [a+÷1n, b+÷1n, c+÷0n], \\
			    G_→ [a+÷0n, b+÷0n, c+÷0n], ～ G_→ [a+÷1n, b+÷0n, c+÷0n], ～～
		\end{matrix}\]
		By the definition of $G_→$, all eight numbers are less than or
		equal to $g(a,b,c)$, so $\LI (M,G_→) (x,y,z) ≤ g(a,b,c) ≤ g(x,y,z)$.
	\end{proof}

	If we apply the monotonic increasing approach to a $200 × 200 × 200$ mesh,
	we get $μ ≤ 4.66359$.  To go below $4.63$, we have to combine this with a
	second approach introduced in the next subsection.

\subsection{Smoothness approach}

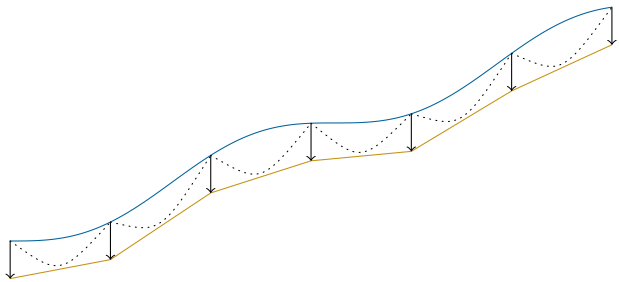
\begin{figure}\centering
	\begin{tikzpicture}[xscale=4/3]
		\foreach\x in{1,...,6}{
			\pgfmathsetmacro\y{example_f(\x-1)}
			\pgfmathsetmacro\z{example_f(\x)}
			\draw[dotted](\x-1,\y)..controls(\x-.5,\y/2+\z/2-0.6)..(\x,\z);
		}
		\draw[PMS3015]plot[domain=0:6,samples=200](\x,{example_f(\x)});
		\draw[PMS1245](0,-.5)foreach\x in{0,...,6}{--(\x,{example_f(\x)-.5})};
		\foreach\x in{0,...,6}{\draw[->](\x,{example_f(\x)})--+(0,-.5);}
		\path(-.1,0)(6.1,0);
	\end{tikzpicture}
	\caption{
		If the target function is smooth (the second derivative has an upper
		bound, $\sup f'' ≤ m$), it can be lower bounded by parabolas.  Thus,
		shifting the linear interpolant $mδ² / 8$ units down makes it a lower
		bound, where $δ$ is the width of the intervals.
	}\label{fig:smooth}
\end{figure}

	Idea: if we control two end points and the second derivative,
	we control the evaluations in between.

	\begin{lem}\label{lem:smooth-1d}
		Let $f： [0,1] → ℝ$ be doubly-differentiable on $[0,1]$.  Suppose $f(0)
		= f(1) = 0$ and $f''(x) ≤ m$ for some $m ≥ 0$.  Then for any $x ∈
		[0,1]$,
		\[f(x) ≥ -÷m8.\] 
	\end{lem}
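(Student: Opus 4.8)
The plan is to deduce the bound from one elementary fact: a concave function that vanishes at both endpoints of an interval is nonnegative throughout that interval. First I would absorb the curvature of $f$ into a parabola by setting
\[
g(x) \coloneqq f(x) + \frac{m}{2}\,x(1-x).
\]
Then $g$ is twice differentiable on $[0,1]$, satisfies $g(0) = g(1) = 0$, and has $g''(x) = f''(x) - m \le 0$ for every $x \in [0,1]$; hence $g$ is concave.

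Next I would invoke the standard consequence that a concave $g$ with $g(0) = g(1) = 0$ is nonnegative on $[0,1]$: writing $x \in [0,1]$ as $(1-x)\cdot 0 + x\cdot 1$, concavity gives $g(x) \ge (1-x)\,g(0) + x\,g(1) = 0$. Unwinding the definition of $g$ turns this into $f(x) \ge -\tfrac{m}{2}\,x(1-x)$ for all $x \in [0,1]$.

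Finally I would bound the parabola: on $[0,1]$ the product $x(1-x)$ is maximized at $x = \tfrac12$, where it equals $\tfrac14$. Therefore $f(x) \ge -\tfrac{m}{2}\cdot\tfrac14 = -\tfrac{m}{8}$, which is exactly the claimed inequality.

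The only place that deserves care — and the closest thing to an obstacle — is the step from $g'' \le 0$ everywhere to concavity of $g$, since $g$ is assumed merely twice differentiable, so $g''$ need not be continuous and one should not appeal to integration. I would argue by contradiction: if $g$ dipped strictly below its chord over some subinterval $[a,c]$ at an interior point $b$, then subtracting that chord yields a function $h$ with $h(a) = h(c) = 0$ but $h(b) < 0$; a short argument with the mean value theorem — applied on $[a,b]$ and on $[b,c]$ to get points $\xi_1 < \xi_2$ with $h'(\xi_1) < 0 < h'(\xi_2)$, and then once more on $[\xi_1,\xi_2]$ — produces a point $\eta$ with $h''(\eta) = g''(\eta) > 0$, contradicting $g'' \le 0$. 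Alternatively one may simply cite the classical differentiability criterion for concavity.
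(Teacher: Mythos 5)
Your proof is correct, and it reaches the same intermediate bound $f(x) \ge -\tfrac{m}{2}x(1-x)$ as the paper, but by a genuinely different route. The paper's proof is a one-liner that cites the Lagrange interpolation remainder formula: since the linear interpolant of $f$ at the nodes $0$ and $1$ is identically zero, the remainder theorem gives $f(x) = \tfrac{1}{2}x(x-1)f''(\xi)$ for some $\xi \in [0,1]$, and then $f'' \le m$ together with $x(1-x)\le \tfrac14$ finishes it. You instead construct the auxiliary function $g(x) = f(x) + \tfrac{m}{2}x(1-x)$, observe that $g'' \le 0$ and $g(0)=g(1)=0$, and invoke the elementary fact that a concave function vanishing at the endpoints is nonnegative on the interval. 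The two arguments are in the same circle of ideas — the Lagrange remainder theorem is itself proved by a Rolle-type argument applied to an auxiliary function much like your $g$ — but yours is more self-contained: it needs nothing beyond the mean value theorem and the chord inequality for concave functions, whereas the paper treats the remainder formula as a black box. Your extra care about passing from $g'' \le 0$ to concavity without assuming continuity of $g''$ is sound, though it can be shortened: $g''\le 0$ everywhere and the mean value theorem already force $g'$ to be non-increasing, which is a standard characterization of concavity for differentiable functions. Both proofs are fine; yours trades the citation for a couple of extra lines.
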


	\begin{proof}
		As a special case of Lagrange interpolation, consider a linear
		interpolation using $(0, f(0))$ and $(0, f(1))$ as reference points.
		Its error term is $(0-x) (1-x) f''(y) / 2$ for some $y ∈ [0,1]$.
		Clearly $x (1-x) ≤ 1/4$ and this finishes the proof.
	\end{proof}

	\begin{lem}\label{lem:smooth-3d}
		Let $n$ be a positive integer.  Let $g： [0, 1/n]³ → ℝ$ be
		doubly-differentiable on $[0, 1/n]³$.  Suppose $g = 0$ at the eight
		corners of the cube $[0, 1/n]³$.  Suppose $g_{xx} ≤ m₁$ and $g_{yy} ≤
		m₂$ as well as $g_{zz} ≤ m₃$ for some $m₁,m₂,m₃ ≥ 0$.  Then for any
		$(x,y,z) ∈ [0, 1/n]³$,
		\[g(x,y,z) ≥ - ÷ {m₁+m₂+m₃} {8n²}.\]
	\end{lem}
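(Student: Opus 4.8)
The plan is to bootstrap from the one-dimensional estimate in \cref{lem:smooth-1d}, stripping off a linear (affine) interpolant one coordinate at a time. First I would record the rescaled form of \cref{lem:smooth-1d}: if $f\colon[0,1/n]\to\mathbb R$ vanishes at its two endpoints and satisfies $f''\le m$, then $f(t)\ge -m/(8n^2)$ for all $t$. This is immediate from \cref{lem:smooth-1d} applied to $t\mapsto f(t/n)$, whose second derivative is $f''(t/n)/n^2\le m/n^2$.

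Next, for each fixed $(y,z)\in[0,1/n]^2$, consider the function $x\mapsto g(x,y,z)$ minus its affine interpolant through $x=0$ and $x=1/n$; call the result $g_1(\cdot,y,z)$. Subtracting an affine function leaves the second $x$-derivative unchanged, so $\partial_{xx}g_1=g_{xx}\le m_1$, and $g_1$ vanishes at $x=0$ and $x=1/n$ by construction. The rescaled one-dimensional bound then gives
\[
g(x,y,z)\ge(1-nx)\,g(0,y,z)+nx\,g(1/n,y,z)-\frac{m_1}{8n^2},
\]
and since $x\in[0,1/n]$ the coefficients $1-nx$ and $nx$ are nonnegative and sum to $1$.

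It now suffices to bound $g(0,y,z)$ and $g(1/n,y,z)$ from below on the square $[0,1/n]^2$. Each of these is a function of $(y,z)$ that vanishes at the four corners of the square --- which are corners of the original cube --- and whose pure second derivatives obey $\partial_{yy}\le m_2$ and $\partial_{zz}\le m_3$. Applying the same peeling step in the $y$-direction, and then the one-dimensional bound in the $z$-direction, yields $g(0,y,z)\ge -(m_2+m_3)/(8n^2)$ and likewise $g(1/n,y,z)\ge -(m_2+m_3)/(8n^2)$. Substituting these into the displayed inequality and using $(1-nx)+nx=1$ produces $g(x,y,z)\ge -(m_1+m_2+m_3)/(8n^2)$, as claimed. (An equivalent route is to bound the error of the trilinear interpolant of $g$ at the eight corners, which is identically $0$; the nested peeling above is just a clean way to package that estimate.)

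There is essentially no analytic content beyond \cref{lem:smooth-1d}; the one thing to be careful about is the bookkeeping of the three nested peeling steps --- checking at each stage that the interpolation weights lie in $[0,1]$ (they do, because the relevant argument lies in $[0,1/n]$) and that subtracting an affine function preserves the one-sided bound on the second derivative in that direction. That is where a careless write-up could slip, but it is not a genuine obstacle.
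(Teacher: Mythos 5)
Your proof is correct and takes essentially the same route as the paper: peel off one coordinate at a time, applying \cref{lem:smooth-1d} three times and accumulating an $m_i/(8n^2)$ loss per direction. Your write-up is in fact slightly more careful than the paper's, which invokes \cref{lem:smooth-1d} directly on slices that no longer vanish at their endpoints; your step of first subtracting the affine interpolant (so the hypothesis $f(0)=f(1)=0$ is literally satisfied) makes that implicit use explicit.
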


	\begin{proof}
		First apply \cref{lem:smooth-1d} in the $x$-direction to lower bound
		$g(x,0,0)$, $g(x,0,1/n)$, $g(x,1/n,0)$, and $g(x,1/n,1/n)$ by $- m₁ /
		8n²$.  Then apply \cref{lem:smooth-1d} in the $y$-direction to lower
		bound $g(x,y,0)$ and $g(x,y,1/n)$ by $- (m₁+m₂) / 8n²$.  Finally, apply
		\cref{lem:smooth-1d} in the $z$-direction to lower bound $g(x,y,z)$ by
		$- (m₁+m₂+m₃) / 8n²$.
	\end{proof}

	\Cref{lem:smooth-3d} provides an excellent way to lower bound $g$ on a mesh
	as the denominator $8n²$ keeps up with the memory usage $O(n³)$ better than
	the monotonic increasing approach did, in which case the error was
	$O(g'/n)$,

	Let us declare a new array $G_↓ ∈ ℝ ^ {(n+1) × (n+1) × (n+1)}$ that is
	parametrized by $M$ and populated by
	\[G_↓ [a,b,c] ← g(a,b,c) - ÷ {m₁+m₂+m₃} {8n³},\]
	where
	\begin{align*}
		m₁ & = \sup_{((a,b,c) + [-1/n,1/n]³) ∩ [0,1]³} \max(g_{xx}, 0), \\
		m₂ & = \sup_{((a,b,c) + [-1/n,1/n]³) ∩ [0,1]³} \max(g_{yy}, 0), \\
		m₃ & = \sup_{((a,b,c) + [-1/n,1/n]³) ∩ [0,1]³} \max(g_{zz}, 0).
	\end{align*}
	The suprema are taken over all mesh cells that touch $(a,b,c)$.  The
	following lemma confirms that linearly interpolating $G_↓$ serves as a
	valid lower bound of $g$.  See also \cref{fig:smooth} for an illustration.

	\begin{lem}
		$\LI (M, G_↓) ≤ g$ pointwise.
	\end{lem}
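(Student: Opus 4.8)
The plan is to adapt the cell-by-cell argument used to prove the corresponding statement for $G_\to$. Fix a mesh cell $C \coloneqq (a,b,c) + [0,1/n]^3$ whose eight corners $P_1,\dots,P_8$ all lie in $M$. Since these cells cover $[0,1]^3$ and both $\LI(M,G_\downarrow)$ and $g$ are continuous, it suffices to prove $\LI(M,G_\downarrow)(Q) \leq g(Q)$ for every $Q \in C$. On $C$ the interpolant is trilinear, so $\LI(M,G_\downarrow)(Q) = \sum_{j=1}^{8} \lambda_j(Q)\,G_\downarrow[P_j]$ for weights $\lambda_j(Q) \geq 0$ with $\sum_j \lambda_j(Q) = 1$ that depend only on $Q$ and the cell, not on the stored values. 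Write $\sigma(P) \coloneqq (m_1+m_2+m_3)/(8n^2)$ for the slack subtracted at a mesh point $P$ in the definition of $G_\downarrow$ (with $m_1,m_2,m_3$ the corner-dependent suprema appearing there), so that $G_\downarrow[P] = g(P) - \sigma(P)$, and let $\ell$ be the trilinear interpolant of $g$ itself on $C$, i.e.\ $\ell(Q) = \sum_j \lambda_j(Q)\,g(P_j)$. Then $\LI(M,G_\downarrow)(Q) = \ell(Q) - \sum_j \lambda_j(Q)\,\sigma(P_j)$, and the target inequality becomes
\[\ell(Q) - g(Q) \;\leq\; \sum_{j=1}^{8} \lambda_j(Q)\,\sigma(P_j).\]

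To establish this I would control the interpolation error $\ell - g$ via \cref{lem:smooth-3d}. Set $\tilde g \coloneqq g - \ell$ on $C$; it vanishes at all eight corners, and since $\ell$ is linear in each coordinate its pure second derivatives are zero, so $\tilde g_{xx} = g_{xx}$, $\tilde g_{yy} = g_{yy}$, $\tilde g_{zz} = g_{zz}$ throughout $C$. The key observation is that $C$ is contained in $\bigl(P_j + [-1/n,1/n]^3\bigr) \cap [0,1]^3$ for \emph{every} one of its own corners $P_j$ — any point of $C$ is within $\ell^\infty$-distance $1/n$ of each corner, and $C \subseteq [0,1]^3$. Hence the three suprema $m_1,m_2,m_3$ attached to any fixed corner $P_j$ dominate $g_{xx},g_{yy},g_{zz}$ over all of $C$, and \cref{lem:smooth-3d}, applied to $\tilde g$ on the cube $C$ (a translate of $[0,1/n]^3$), yields $\tilde g(Q) \geq -(m_1+m_2+m_3)/(8n^2) = -\sigma(P_j)$ for all $Q \in C$. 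Since this holds for \emph{each} corner, $\tilde g(Q) \geq -\min_j \sigma(P_j)$; and a convex combination of $\sigma(P_1),\dots,\sigma(P_8)$ is at least their minimum, so $\min_j \sigma(P_j) \leq \sum_j \lambda_j(Q)\,\sigma(P_j)$. Combining, $\ell(Q) - g(Q) = -\tilde g(Q) \leq \min_j \sigma(P_j) \leq \sum_j \lambda_j(Q)\,\sigma(P_j)$, which is exactly the displayed inequality. As a by-product, $\sigma(P_j) \geq 0$, so this also records $G_\downarrow \leq G$ entry-wise, consistent with running $\TC$ on $G_\downarrow$.

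The main obstacle is the differentiability hypothesis of \cref{lem:smooth-1d,lem:smooth-3d}: on cells that abut the degenerate corners $\{0,1\}^3$, $g$ can lose smoothness — much as $f$ does along $y = 0$ — and a pure second derivative may be unbounded there, making the corresponding $m_i$ infinite. I would handle those finitely many boundary cells separately, either by bounding the (manifestly bounded, $[0,1]$-valued) function $g$ on them directly, or by a continuity/limiting argument from the interior, in the same spirit as the boundary surgery used elsewhere in the paper. On every interior cell the closed form of $g$ from \cref{lem:z-in-z} gives finite $m_1,m_2,m_3$, and the argument above goes through verbatim.
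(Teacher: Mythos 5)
Your proof is correct and takes essentially the same cell‐by‐cell route as the paper: express $g - \LI(M,G_\downarrow)$ as the interpolation error $\tilde g = g - \LI(M,G)$ (which vanishes at the eight corners and shares $g$'s pure second derivatives) plus the interpolated slack, lower‐bound $\tilde g$ via \cref{lem:smooth-3d}, and check that the slack dominates the error. The only inessential differences are that the paper feeds \cref{lem:smooth-3d} the cell‐local suprema of $g_{xx},g_{yy},g_{zz}$ rather than the corner‐attached $\sigma(P_j)$ (the cell suprema are no larger than any $\sigma(P_j)$, so either choice closes the inequality against $\sum_j\lambda_j\sigma(P_j)$), and that your flag about differentiability of $g$ near the degenerate corners of $[0,1]^3$ is a real technical wrinkle the paper's proof silently assumes away.
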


	\begin{proof}
		It suffices to check the inequality cell-by-cell.  Fix an $(a,b,c) ∈ M$;
		we shall prove that the inequality holds on the cell $(a,b,c) + [0,
		1/n]³$.  At the eight corners of this cell, $g$ and $\LI (M, G)$
		coincide.  Hence $˜g ≔ g - \LI (M, G)$ is a function that is zero at the
		eight corners.  Its second derivatives $˜g_{xx}$, $˜g_{yy}$, and
		$˜g_{zz}$ are nothing but $g_{xx}$, $g_{yy}$, and $g_{zz}$,
		respectively.  Now apply \cref{lem:smooth-3d}: $˜g ≥ - (m₁+m₂+m₃) /
		8n³$, where $m₁,m₂,m₃$ are the suprema of the second derivatives over
		the concerned cell.  Hence
		\begin{align*}
			g
			& ≥ \LI (M, G) - ÷ {m₁+m₂+m₃} {8n³} \\
			& ≥ \LI (M, G_↓).
		\end{align*}
		This finishes the proof.
	\end{proof}

\subsection{Interval arithmetic for derivatives}

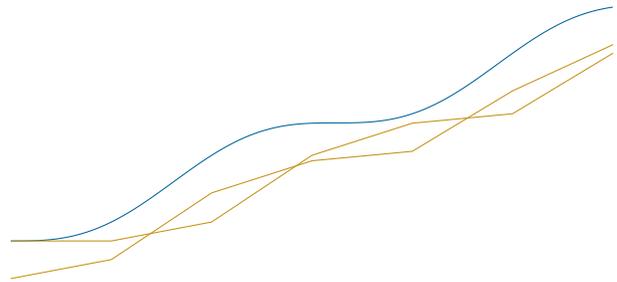
\begin{figure}\centering
	\begin{tikzpicture}[xscale=4/3]
		\draw[PMS3015]plot[domain=0:6,samples=200](\x,{example_f(\x)});
		\draw[PMS1245](0,-.5)foreach\x in{0,...,6}{--(\x,{example_f(\x)-.5})};
		\draw[PMS1245](0,0)foreach\x in{1,...,6}{--(\x,{example_f(\x-1)})};
	\end{tikzpicture}
	\caption{
		Both \cref{fig:increase,fig:smooth} are proper lower bounds.  Now we
		have the freedom to choose tighter bound on an interval-by-interval
		basis.
	}\label{fig:both}
\end{figure}

	In the previous subsection, we see how to initialize $G_↓$ in
	principle---evaluate $g$ at every mesh point and subtract by $1/8n²$ times
	the local suprema of second derivatives.  It remains to actually
	compute the second derivatives.
	
	The first shortcut we take is that $m₁,m₂,m₃$ do not have to be the exact
	suprema; any upper bounds serve the same purpose.  So it remains to bound
	the second derivatives from above for every cell.  In fact, since we have
	$8n²$ in the denominator, there is nearly no precision requirement; any
	$m₁,m₂,m₃$ that are $< 10$ will end up giving a better bound than $G_→$.

	The second shortcut we take is that there are softwares that can take care
	of differentiation.  Given the formula of $g$, SageMath, an open-source
	mathematical software system, computes its symbolic derivatives by passing
	the queries to Maxima, a classical open-source software that excels at
	algebra.

	Once the symbolic expressions of $g_{xx}$, $g_{yy}$, and $g_{zz}$ are
	obtained, the third---perhaps the biggest---shortcut we take is treating
	each cell as a fuzzy triple of real numbers and evaluating the expressions
	using interval arithmetic.  For example, the cell $(0.1,0.4,0.7) + [0,0.1]³$
	can be seen as an imperfect representation of three real numbers $x$, $y$,
	and $z$ that are approximately $0.15$, $0.45$, and $0.75$ with error radius
	$0.05$.  When evaluating, say, $xy - z$, all we know is that the true value
	must lie in the set
	\begin{multline*}
		\{ xy-z ｜ (x,y,z) ∈ (0.1,0.4,0.7) + [0,0.1]³ \} \\
		= [0.1·0.4-0.8, 0.2·0.5-0.7].
	\end{multline*}
	An interval arithmetic package takes cares of the tedious edge cases and
	returns an interval that \emph{provably} contains the true value of every
	mathematical expression.
	
	In our case, MPFI is the C-library SageMath calls behind the scene.  The
	abbreviation stands for multiple-precision floating-point interval.  A
	defining feature of the MPFI library is that it temporarily increases the
	precision during the evaluation process to narrow down the output interval.
	As an example, evaluating $x - x$ without simplification first will double
	the error radius.  But by cutting the interval into smaller pieces the
	result will be the union of smaller intervals surrounding $0$, hence
	improving the output precision.

\subsection{The better-of-the-two approach}

	Given two approaches, $G_→$ and $G_↓$, we see that $G_→$ is tighter at
	places where $g'$ is small but $g''$ is large; and $G_↓$ is tighter whenever
	$g'$ is big and $g''$ is far less than $8n²$.  In the sequel, we will let
	$G_↘$ be the array that uses values from $G_→$ or $G_↓$ depending on which
	is tighter.

	Consider a cell $(a,b,c) + [0, 1/n]³$ whose lower-left-near corner is at
	$(a,b,c)$ and upper-right-far corner is at $(a+1/n, \allowbreak b+1/n,
	\allowbreak c+1/n)$.  For every such cell, we want to decide whether to use
	the monotonic increasing approach or the smoothness approach.  We set a
	rule: we will use $G_→$ by default, but if $G_→$ is worse than $G_↓$ at all
	eight corners $(a,b,c) + \{0,1\}³$, we switch to $G_↓$.

	Now that we have specified which approach to use for every cell, we can
	initialize $G_↘$.  Intuitively speaking, $G_↘ [a,b,c]$ will be $G_→
	[a,b,c]$ if any cell that touches $(a,b,c)$ decides to go for the increasing
	approach, but will be $G_↓ [a,b,c]$ if all cells that touch $(a,b,c)$ decide
	to go for the smoothness bound.  A formal summary is as below,
	\begin{itemize}
		\item $G_↘ [a,b,c] = G_→ [a,b,c]$ iff for some mesh point $(x,y,z) ∈
		      (a,b,c) + \{-1/n, 0, 1/n\}³$ that shares a common cell with
		      $(a,b,c)$, the monotonic increasing approach is better: $G_→
		      [x,y,z] ≥ G_↓ [x,y,z]$.
		\item $G_↘ [a,b,c] = G_↓ [a,b,c]$ iff for all mesh points $(x,y,z) ∈
		      (a,b,c) + \{-1/n, 0, 1/n\}³$ that share a common cell with
		      $(a,b,c)$, the smoothness approach is better: $G_↓ [x,y,z] ≥ G_→
		      [x,y,z]$.
	\end{itemize}
	The following theorem concludes this appendix.

	\begin{thm}\label{thm:better}
		With $G_↘$ defined as above,  we have
		\[\LI (M, G_↘) ≤ g.\]
		With $ˇG_↘$ being the result of performing $\TC$ on $G_↘$,  we have
		\[ˇg_↘ ≔ \LI (M, ˇG_↘) ≤ ˘g.\]
		In particular, with $x ≔ Z(W)$ we have
		\[Z(W^ｓ) ≥ ˇg_↘ (√x,√x,x).\]
	\end{thm}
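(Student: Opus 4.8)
The plan is to prove the three displayed inequalities in the order they appear, since each feeds the next.

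First I would establish $\LI(M, G_↘) ≤ g$ pointwise, arguing cell by cell and reusing the two lemmas of the preceding subsection that give $\LI(M, G_→) ≤ g$ and $\LI(M, G_↓) ≤ g$. Fix a cell $C = (a,b,c) + [0,1/n]^3$ with corner set $V(C) = (a,b,c) + \{0, 1/n\}^3$. The key observation is that the cell-assignment rule is self-consistent on $C$. If $C$ is assigned the monotonic increasing approach, then by the rule some corner $w ∈ V(C)$ satisfies $G_→[w] ≥ G_↓[w]$; since $w$ shares the cell $C$ with every $v ∈ V(C)$ (and differs from it by an offset in $\{-1/n,0,1/n\}^3$), the defining condition for $G_↘$ forces $G_↘[v] = G_→[v]$ at all eight corners, so $\LI(M, G_↘)$ coincides with $\LI(M, G_→)$ on $C$ and is $≤ g$ there. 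If instead $C$ is assigned the smoothness approach, then $G_→[v] ≤ G_↓[v]$ at every $v ∈ V(C)$; since $G_↘[v]$ is always one of $G_→[v]$ or $G_↓[v]$, we get $G_↘[v] ≤ G_↓[v]$ at every corner, and because tri-linear interpolation is monotone in its data values, $\LI(M, G_↘) ≤ \LI(M, G_↓) ≤ g$ on $C$. As every point of $[0,1]^3$ lies in some cell, $\LI(M, G_↘) ≤ g$ everywhere.

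Next I would deduce $ˇg_↘ ≤ ˘g$. By the same monotone-convergence argument as in \cref{pro:converge} — the only ingredients being that $\TC$ never raises a data point and that $G_↘$ is bounded below, now by a constant array (the all-zero array of \cref{pro:converge} must be replaced, since $G_↓$, hence $G_↘$, may take small negative values near $0$ and $1$) — running $\TC$ on $G_↘$ converges to $ˇG_↘$, the entry-wise largest tri-convex array with $ˇG_↘ ≤ G_↘$. Thus $ˇG_↘ ≤ G_↘$ and $ˇG_↘$ satisfies the discrete convexity criteria \cref{cri:convex,cri:convey,cri:convez}, so \cref{lem:tri-convex} makes $ˇg_↘ = \LI(M, ˇG_↘)$ tri-convex, while monotonicity of interpolation together with the first part gives $ˇg_↘ ≤ \LI(M, G_↘) ≤ g$ pointwise. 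Hence $ˇg_↘$ is a tri-convex function lying below $g$; since $˘g$ is by definition the pointwise supremum of all such functions, $ˇg_↘ ≤ ˘g$.

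The third statement is then immediate: for $W$ a parallel combination and $x = Z(W)$, \cref{thm:serial-novel} gives $Z(W^ｓ) ≥ ˘g(√x, √x, x)$, and the second part gives $˘g(√x, √x, x) ≥ ˇg_↘(√x, √x, x)$, so $Z(W^ｓ) ≥ ˇg_↘(√x, √x, x)$. I expect the first part to absorb essentially all of the effort — not through analytic difficulty but through the bookkeeping needed to confirm that on every cell the eight values $G_↘[v]$ are either all drawn from $G_→$ or are all dominated by $G_↓$; once that consistency is pinned down, the rest reduces to the two interpolation lemmas already in hand and to the elementary monotonicity of multilinear interpolation in its data values, with the negative-entry point in the convergence step being the only other place that needs a second glance.
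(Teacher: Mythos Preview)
Your proposal is correct and follows the same three-step skeleton as the paper's proof (cell-by-cell consistency of $G_↘$; tri-convexity via \cref{lem:tri-convex}; then \cref{thm:serial-novel}), but you fill in considerably more detail than the paper does. Two of your elaborations are genuine additions worth keeping: (i) your explicit verification that, on a cell assigned the increasing approach, the witness corner $w$ with $G_→[w] \geq G_↓[w]$ forces $G_↘[v]=G_→[v]$ at \emph{all} eight corners of that cell (because every such $v$ shares the cell $C$ with $w$); and (ii) your observation that the all-zero lower bound in \cref{pro:converge} must be replaced by a constant array, since $G_↓$ and hence $G_↘$ can dip slightly below zero near the edges of the cube --- the paper's invocation of \cref{pro:converge} silently assumes this adjustment.
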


	\begin{proof}
		The first statement is by how $G_↘$ merges data points from $G_→$ and
		$G_↓$.  The second statement is by the first statement and
		\cref{lem:tri-convex}.  The last statement is by the second statement
		and \cref{thm:serial-novel}.
	\end{proof}

	For a faster way to convexify an array, see the next appendix.

\section{Convexify Faster}\label{app:scan}

	In this appendix, we describe a strategy to tri-convexify a
	three-dimensional array $G$.  This strategy converges faster than repeated
	uses of \cref{for:descend}.

	Consider a one dimensional array $A = \{a₀, …, a_n\}$ that is parametrized
	by $L = \{l₀, …, l_n\}$.  We want to lower some entries of $A$ so that
	$\LI(L,A)$ becomes convex.  This is equivalent to finding the convex hull of
	points \[(l₀,\max(A))\,,\, (l₀,a₀)\,,\, …\,,\, (l_n,a_n)\,,\,
	(l_n,\max(A)).\] We next apply Graham's scan.  Since the $l$-coordinates are
	already sorted, the time complexity of one scan is $O(n)$.  The output of
	Graham's scan is a list of points that support the convex hull.  For points
	that lie strictly inside, we update their $a$-values using linear
	interpolation.  This step also costs time complexity $O(n)$.

	Now that we know how to convexify one dimensional arrays, we iteratively
	apply this to the axes of the thee-dimensional array $G$.  Here, an axis of
	$G$ is data points where two coordinates are fixed and the other coordinate
	is varying.

	Since convexifying one axis only lowers the data points, $G$ is ever
	decreasing.  But since $G$ stays non-negative, it converges by monotone
	convergence theorem.

\def\bibsetup{%biblatex.def#L214
	\biblabelsep0pt
	\hbadness10000
	\interlinepenalty=1000\relax
	\widowpenalty=1000\relax
	\clubpenalty=1000\relax
	\frenchspacing
	\biburlsetup
}
\printbibliography

\end{document}